\newtheorem{theorem}{Theorem}
\newtheorem{corollary}[theorem]{Corollary}
\newtheorem{example}{Example}
\renewcommand\thetable{\@arabic\c@table}
\begin{document}


\title{Purity and construction of arbitrary dimensional $k$-uniform mixed states}

\author{Xiao Zhang$^1$$^2$$^3$, Shanqi Pang$^3$\footnote{Email: shanqipang@126.com}, Shao-Ming Fei$^4$, Zhu-Jun Zheng$^5$}

\affiliation{$^1$Post-doctoral innovation practice base, \\
Henan Institute of Science and Technology, Xinxiang, 453003, China\\
$^2$Postdoctoral Research Center of Henan Bainong Seed Industry Co., Ltd,\\
Xinxiang, 453003, China\\
$^3$College of Mathematics and Information Science, Henan Normal University,\\
 Xinxiang, 453007, China\\
$^4$School of Mathematical Science, Capital Normal University, Beijing, 100048, China\\
$^5$School of Mathematics, South China University of Technology, Guangzhou, 510641, China}

\begin{abstract}
$k$-uniform mixed states are a significant class of states characterized by all $k$-party reduced states being maximally mixed. Novel methodologies are constructed for constructing $k$-uniform mixed states with the highest possible purity. By using the orthogonal partition of orthogonal arrays, a series of new $k$-uniform mixed states is derived. Consequently, an infinite number of higher-dimensional $k$-uniform mixed states, including those with highest purity, can be generated.

{\bf Key words: } $k$-uniform mixed states; highest purity; orthogonal arrays; orthogonal partition
\end{abstract}

\maketitle

\section{Introduction}

Quantum entanglement is crucial in quantum information theory. Entangled states enhance various quantum information processing tasks such as teleportation \cite{bou97,rie04}, quantum key distribution \cite{lo12}, dense coding, error-correcting codes \cite{arn13,sco04}, and quantum computation \cite{jozsa03}. However, the characterization of multipartite entanglement remains a significant unresolved issue in quantum entanglement theory \cite{goy16}.

A noteworthy class of pure entangled states, known as $k$-uniform states, has recently attracted considerable interest. An $N$-partite pure state is $k$ uniform if all its $k$-partite reduced density matrices are maximally mixed \cite{sco04,goy14}, and such states are also termed $k$-multipartite maximally entangled pure states \cite{arn13}. Particularly, when $k=\lfloor \frac{N}{2} \rfloor$ ($\lfloor\rfloor$ is the floor function), the states are referred to as absolutely maximally entangled (AME) states, denoted as AME$(N,d)$ \cite{hel12}, where $d$ represents the dimension of the individual systems. Methods such as graph states \cite{helwig}, orthogonal arrays (OAs) \cite{goy14,goy16,npj,qinp,jpa,limao19,shifei22}, mutually orthogonal Latin squares and cubes \cite{goy15}, symmetric matrices, and algebraic geometry codes \cite{feng17} have been employed to construct maximally entangled pure states for both homogeneous and heterogeneous systems.

Nevertheless, $k$-uniform pure states such as AME$(4,2)$ and AME$(7,2)$ \cite{huber17} do not exist. K{\l}obus et al. \cite{kmix} posed a question: if $k$-uniform pure states cannot be constructed, what is the highest possible purity for a $k$-uniform mixed state for a given number of parties $N$? They introduced a method for explicitly constructing $k$-uniform states using $N$-qudit operators and provided several examples for $k$-uniform qudit mixed states. Nevertheless, constructing $k$-uniform mixed states, particularly for high-dimensional systems, remains challenging.

This study explores the purity and construction of arbitrary dimensional $k$-uniform mixed states. In Section \ref{zh}, we convert the problem into constructing a matrix that meets specific criteria and present new $k$-uniform mixed states, including those with the highest purity. Section \ref{gw} details a general method for constructing $k$-uniform mixed states using orthogonal partitions of OAs, enabling the creation of infinitely many higher-dimensional $k$-uniform mixed states with maximal purity. The purity of the obtained states is discussed in Section \ref{dis}. We present the differences between our findings and those reported in Ref. \cite{kmix} and conclude in Section \ref{con}.

\section{$k$-Uniform Mixed Qubit States from Matrices over Galois Field GF(4)}\label{zh}

K{\l}obus et al. \cite{kmix} presented a scheme for constructing $k$-uniform states from particular sets of $N$-qubit Pauli matrices. Let $\sigma_x$, $\sigma_y$ and $\sigma_z$ be the standard Pauli matrices, and $\mathbb{I}_n$ be the identity operator of order $n$. Suppose that a set of $N$-qubit Pauli operators $\mathcal{G}=\{G_1,G_2,\ldots,G_m\}$ exists, where $G_i=U_1\otimes\cdots\otimes U_N$ with $U_i\in\{I,\sigma_x,\sigma_y,\sigma_z\}$, such that\\
(i) mutual commutative, $[G_i,G_j]=0$ for all $i$, $j$;\\
(ii) independence, $G_1^{i_1}G_2^{i_2}\cdots G_m^{i_m}\sim \mathbb{I}_{2^N}$ only when $i_1=\ldots=i_m=0$ with $i_j=\{0,1\}$, namely, $G_1^{i_1}G_2^{i_2}\cdots G_m^{i_m}$ is proportional to the identity operator $\mathbb{I}_{2^N}$ only when $i_1=\ldots=i_m=0$ with $i_j=\{0,1\}$;\\
(iii) $k$ uniformity, the $N$-qubit Pauli operator $G_1^{i_1}G_2^{i_2}\cdots G_m^{i_m}$ $(i_j=\{0,1\})$ contains the identity operators on at most $N-k-1$ positions. The elements of $\mathcal{G}$ are called generators. A $k$-uniform state is generated by the elements of $\mathcal{G}$ as follows:
$$\rho=\frac{1}{2^N}\sum_{j_1,\ldots,j_m=0}^{1}G_1^{j_1}\cdots G_m^{j_m}=\frac{1}{2^N}(\mathbb{I}+G_1)\cdots(\mathbb{I}+G_m).$$
For the case of $m=N$, this construction leads to a $k$-uniform pure state with purity ${\rm Tr}(\rho^2)=1$. Otherwise, $\rho$ is a mixed state.

{\sf Remark:} The independence of $\{G_1,G_2,\ldots,G_m\}$ is analogous to the linear independence of a set of vectors $\{v_1, v_2, \ldots, v_m\}$, which is said to be linearly independent if and only if the equation $c_1 v_1 + c_2 v_2 + \cdots + c_n v_n = 0$ has the unique solution $c_1 = c_2 = \cdots = c_n = 0$. Similarly, a set of $\{G_1,G_2,\ldots,G_m\}$ is said to be independent if and only if the equation $G_1^{i_1}G_2^{i_2}\cdots G_m^{i_m}\sim \mathbb{I}_{2^N}$ has the unique solution $i_1=\ldots=i_m=0$.

Accordingly, the state $\rho$ generated by $\mathcal{G}$ has a pure state decomposition, $\rho=\sum_ip_i|\psi_i\rangle\langle\psi_i|$, given by
$2^{N-m}$ pure states $|\psi_1\rangle$, $|\psi_2\rangle$, $\ldots$, $|\psi_{2^{N-m}}\rangle$ and $p_i=\frac{1}{2^{N-m}}$. Furthermore, all the $k$-partite reduced density matrices of $\rho$ are maximally mixed, in other words, all the $k$-partite reduced density matrices are proportional to the identity matrix. Therefore, $\rho$ is a $k$-uniform state.

$k$-uniform states can also be obtained by constructing particular matrices. For simplicity, denote the elements $\{0,1,x,x+1\}$ of the Galois field GF(4) by $0, 1, 2, 3$, respectively. The mapping $\sigma_0 \rightarrow0$, $\sigma_1\rightarrow1$, $\sigma_2\rightarrow2$ and $\sigma_3\rightarrow3$ leads to the correspondence between the multiplication of the set $\{\sigma_0,\sigma_1,\sigma_2,\sigma_3\}$ and the addition of the set $\{0,1,2,3\}$ in GF(4). Then an $N$-qubit Pauli operator $G_i=\underbrace{\sigma_j\otimes\cdots\otimes\sigma_k}_N$ can be written as a $1\times N$ matrix $(\underbrace{j\cdots k}_N)$, and the set $\mathcal{G}=\{G_1,G_2,\ldots,G_m\}$ provides a $m\times N$ matrix. For example, the $N$-qubit Pauli operator $\sigma_1\otimes\sigma_2\otimes\sigma_3\otimes\sigma_0\otimes\sigma_2$ can be written as a $1\times 5$ matrix $(1 2 3 0 2)$ and $\sigma_2\otimes\sigma_0\otimes\sigma_3\otimes\sigma_1\otimes\sigma_3$ can be written as a $1\times 5$ matrix $(2 0 3 1 3)$. Therefore, the product $G_1G_2=(\sigma_1\otimes\sigma_2\otimes\sigma_3\otimes\sigma_0\otimes\sigma_2)
(\sigma_2\otimes\sigma_0\otimes\sigma_3\otimes\sigma_1\otimes\sigma_3)
=\sigma_1\sigma_2\otimes\sigma_2\sigma_0\otimes\sigma_3\sigma_3\otimes\sigma_0\sigma_1\otimes\sigma_2\sigma_3
=(i\sigma_3)\otimes\sigma_2\otimes\sigma_0\otimes\sigma_1\otimes(i\sigma_1)
=-\sigma_3\otimes\sigma_2\otimes\sigma_0\otimes\sigma_1\otimes\sigma_1$ corresponds to the sum of $(1 2 3 0 2)$ and $(2 0 3 1 3)$, i.e., $(3 2 0 1 1)$.

From a set of $N$-qubit Pauli operators $\mathcal{G}=\{G_1,G_2,\ldots,G_m\}$, we obtain an $m\times N$ matrix $G$ with elements 0, 1, 2, 3. Let $\mathbb{Z}_d^m$ be a $d^m\times m$ matrix whose rows are all possible $m$-tuples over a ring $\mathbb{Z}_d=\{0, 1, \ldots, d-1\}$. Considering the  conditions (i)-(iii) above, we arrive at the following conclusion.

\begin{theorem}\label{gsan}
Let $G_{m\times N}$ be a matrix with elements of GF(4) such that\\
(a) for any two rows $a_i=(a_{i1},a_{i2},\ldots,a_{iN})$ and $a_j=(a_{j1},a_{j2},\ldots,a_{jN})$ of $G$, $N-|\{k(1\leq k\leq N):a_{ik}=0 \ or \ a_{jk}=0 \ or \ a_{ik}=a_{jk}\}|$ is even,\\
(b) the matrix $\mathbb{Z}_2^mG$ has only one zero row,\\
(c) except for the one zero row of matrix $\mathbb{Z}_2^mG$, at most $N-k-1$ zeros are present in each of the remaining rows.\\
Therefore, the purity of $N$-qubit $k$-uniform state given by $G_{m\times N}$ is $2^{m-N}$.
\end{theorem}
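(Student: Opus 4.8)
\emph{Proof plan.} The plan is to show that a matrix $G_{m\times N}$ over $\mathrm{GF}(4)$ satisfying (a)--(c) is exactly an encoding of a generator set $\mathcal{G}=\{G_1,\dots,G_m\}$ of $N$-qubit Pauli operators satisfying conditions (i)--(iii) of Section~\ref{zh}, then to invoke the K{\l}obus et al.\ construction to produce the $k$-uniform state $\rho=\tfrac{1}{2^{N}}(\mathbb{I}+G_1)\cdots(\mathbb{I}+G_m)$, and finally to compute $\mathrm{Tr}(\rho^{2})$. Throughout I would use the dictionary already set up in Section~\ref{zh}: the $i$-th row $a_i=(a_{i1},\dots,a_{iN})$ of $G$ encodes $G_i=\sigma_{a_{i1}}\otimes\cdots\otimes\sigma_{a_{iN}}$, and, since Pauli multiplication corresponds to $\mathrm{GF}(4)$ addition up to an overall phase, the operator $G_1^{i_1}\cdots G_m^{i_m}$ corresponds (up to a phase) to the vector $\sum_{j=1}^{m} i_j a_j$, which is precisely the row of $\mathbb{Z}_2^{m}G$ indexed by $(i_1,\dots,i_m)\in\{0,1\}^m$.

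The three translations I would check are as follows. For (a)$\Leftrightarrow$(i): two one-qubit Paulis $\sigma_a,\sigma_b$ anticommute exactly when $a,b\in\{1,2,3\}$ with $a\neq b$, and commute otherwise (i.e.\ when $a=0$, or $b=0$, or $a=b$); hence $G_iG_j=(-1)^{t_{ij}}G_jG_i$ with $t_{ij}=N-|\{k:a_{ik}=0\text{ or }a_{jk}=0\text{ or }a_{ik}=a_{jk}\}|$, so $[G_i,G_j]=0$ iff $t_{ij}$ is even, which is (a). For (b)$\Leftrightarrow$(ii): a tensor product of one-qubit Paulis is a scalar multiple of $\mathbb{I}_{2^N}$ iff every tensor factor is $\sigma_0$, i.e.\ iff its $\mathrm{GF}(4)$ vector vanishes; thus $G_1^{i_1}\cdots G_m^{i_m}\sim\mathbb{I}_{2^N}$ iff $\sum_j i_j a_j=0$, so ``$\mathbb{Z}_2^mG$ has only one zero row'' is exactly independence. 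For (c)$\Leftrightarrow$(iii): position $k$ of $\sum_j i_j a_j$ is $0$ iff $G_1^{i_1}\cdots G_m^{i_m}$ carries the identity on qubit $k$; hence ``at most $N-k-1$ zeros in each nonzero row of $\mathbb{Z}_2^mG$'' says exactly that every $G_1^{i_1}\cdots G_m^{i_m}$ with $(i_1,\dots,i_m)\neq 0$ has the identity on at most $N-k-1$ positions, i.e.\ $k$-uniformity (using (b) to match the nonzero rows with the nontrivial exponent vectors).

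With (i)--(iii) in hand, the construction of Section~\ref{zh} gives the $k$-uniform state $\rho=\tfrac{1}{2^N}(\mathbb{I}+G_1)\cdots(\mathbb{I}+G_m)=2^{m-N}P$, where $P=\prod_{i=1}^{m}\tfrac12(\mathbb{I}+G_i)$. I would note that each $\tfrac12(\mathbb{I}+G_i)$ is an orthogonal projector ($G_i^\dagger=G_i$ and $G_i^2=\mathbb{I}$) and that by (i) these projectors commute, so $P$ is itself an orthogonal projector; since $\mathrm{Tr}(\rho)=1$ we have $\mathrm{Tr}(P)=2^{N-m}$, whence $\mathrm{Tr}(\rho^2)=2^{2(m-N)}\mathrm{Tr}(P^2)=2^{2(m-N)}\mathrm{Tr}(P)=2^{m-N}$. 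Equivalently, one may just compute $\sum_i p_i^2$ for the equal-weight, $2^{N-m}$-term pure-state decomposition of $\rho$ recalled in Section~\ref{zh} (its pure states being mutually orthogonal as a basis of the common $+1$-eigenspace of $G_1,\dots,G_m$).

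The one genuinely delicate point, and the part I would treat most carefully, is phase bookkeeping: the Pauli-to-$\mathrm{GF}(4)$ correspondence is faithful only modulo the phase group $\{1,-1,i,-i\}$, so I must check that the properties actually used --- ``proportional to $\mathbb{I}$'' and ``acts as the identity on a given qubit'' --- are phase-insensitive (they are, since a nontrivial one-qubit Pauli is never a scalar while $\sigma_0$ is), and that the $\{0,1\}$-combinations $\sum_j i_j a_j$ really do enumerate the $2^m$ rows of $\mathbb{Z}_2^mG$; the latter uses that $\mathrm{GF}(4)$ has characteristic $2$, so these combinations form a subgroup of $(\mathrm{GF}(4)^N,+)\cong\mathbb{F}_2^{2N}$, and their distinctness follows from (b). Beyond this, the argument is formal.
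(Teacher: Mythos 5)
Your proposal is correct and its core --- translating (a) into mutual commutativity via the parity of anticommuting tensor positions, (b) into independence via the zero rows of $\mathbb{Z}_2^mG$, and (c) into the $k$-uniformity condition on identity positions --- is exactly the paper's proof. The one place you go beyond the paper is the purity: the paper's proof stops after verifying conditions (i)--(iii) and tacitly relies on the earlier assertion that $\rho$ decomposes into $2^{N-m}$ equally weighted orthogonal pure states, whereas you actually derive $\mathrm{Tr}(\rho^2)=2^{m-N}$ by writing $\rho=2^{m-N}P$ with $P$ a product of commuting orthogonal projectors and computing $\mathrm{Tr}(P)=2^{N-m}$ from $\mathrm{Tr}(\rho)=1$ (which itself uses independence to kill the traces of the nontrivial Pauli products). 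That is a genuine, if small, improvement, since the purity is the stated conclusion of the theorem; your remarks on phase bookkeeping are also sound and are glossed over in the paper.
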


\begin{proof}
Let two rows $a_i=(a_{i1},a_{i2},\ldots,a_{iN})$ and $a_j=(a_{j1},a_{j2},\ldots,a_{jN})$ correspond to the $N$-qubit Pauli operators $G_i=\sigma_{a_{i1}}\otimes\sigma_{a_{i2}}\otimes\cdots\otimes\sigma_{a_{iN}}$ and $G_j=\sigma_{a_{j1}}\otimes\sigma_{a_{j2}}\otimes\cdots\otimes\sigma_{a_{jN}}$, respectively. Accordingly, we have\\
\centerline{$G_iG_j=\sigma_{a_{i1}}\sigma_{a_{j1}}\otimes\sigma_{a_{i2}}\sigma_{a_{j2}}\otimes\cdots\otimes\sigma_{a_{iN}}\sigma_{a_{jN}}$} and\\ \centerline{$G_jG_i=\sigma_{a_{j1}}\sigma_{a_{i1}}\otimes\sigma_{a_{j2}}\sigma_{a_{i2}}\otimes\cdots\otimes\sigma_{a_{jN}}\sigma_{a_{iN}}.$} If $a_{ik}=0$ or $a_{jk}=0$, then $\sigma_{a_{ik}}\sigma_{a_{jk}}=\sigma_{a_{jk}}\sigma_{a_{ik}}$. If $a_{ik}=a_{jk}$, then $\sigma_{a_{ik}}\sigma_{a_{jk}}=\sigma_{a_{jk}}\sigma_{a_{ik}}=\sigma_0$. Therefore, $\sigma_{a_{ik}}\sigma_{a_{jk}}=\sigma_{a_{jk}}\sigma_{a_{ik}}$ only if $a_{ik}=0 \ or \ a_{jk}=0 \ or \ a_{ik}=a_{jk}$. Otherwise, $\sigma_{a_{ik}} \sigma_{a_{jk}}=-\sigma_{a_{jk}}\sigma_{a_{ik}}$. Since $N-|\{k(1\leq k\leq N):a_{ik}=0 \ or \ a_{jk}=0 \ or \ a_{ik}=a_{jk}\}|$ is even, $G_iG_j=G_jG_i$. Therefore, $[G_i,G_j]=0$ for all $i$ and $j$.

It is evident that $G_1^{i_1}\cdots G_m^{i_m}$ ($i_j=\{0,1\}$) corresponds to the rows of the matrix $\mathbb{Z}_2^mG$. Accordingly, $G_1^{i_1}\cdots G_m^{i_m}\sim \mathbb{I}_{2^N}$ for $i_1=\ldots=i_m=0$ if and only if the matrix $\mathbb{Z}_2^mG$ has only one zero row. $G_1^{i_1}\cdots G_m^{i_m}$ contains at most $N-k-1$ identity operators if and only if the matrix $\mathbb{Z}_2^mG$ has  at most $N-k-1$ zeros in each row, except for the one zero row.
\end{proof}

By using Theorem \ref{gsan}, we next present matrices that satisfy the conditions (a)-(c). The following states obtained through our approach are not obtained in \cite{kmix}. Particularly, our state in Example \ref{95} boasts a higher purity compared with the one presented in \cite{kmix}.

\begin{example}
For the cases of $N=8,9$ and $k=2,3$, we obtain the states with purity $1$ \cite{npj}.
\end{example}

\begin{example}\label{74}
Case of $N=7$ and $k=4$: We can obtain the matrix,
$$
G_{4\times 7}=\left(\begin{array}{c}
0 0 1 1 1 1 1\\
0 1 0 2 2 2 2\\
1 0 2 0 2 3 3\\
2 2 0 1 2 1 3
\end{array}\right).
$$
Therefore, the $7$-qubit $4$-uniform mixed state can be described by $m=4$ generators:\\
\centerline{$G_1=\sigma_0\otimes\sigma_0\otimes\sigma_1\otimes\sigma_1\otimes\sigma_1\otimes\sigma_1\otimes\sigma_1,$}
\centerline{$G_2=\sigma_0\otimes\sigma_1\otimes\sigma_0\otimes\sigma_2\otimes\sigma_2\otimes\sigma_2\otimes\sigma_2,$}
\centerline{$G_3=\sigma_1\otimes\sigma_0\otimes\sigma_2\otimes\sigma_0\otimes\sigma_2\otimes\sigma_3\otimes\sigma_3,$}
\centerline{$G_4=\sigma_2\otimes\sigma_2\otimes\sigma_0\otimes\sigma_1\otimes\sigma_2\otimes\sigma_1\otimes\sigma_3.$}\\
A $4$-uniform $7$-qubit mixed state with a purity $\frac{1}{8}$ is generated as follows:
$$
\rho=\frac{1}{2^7}\sum_{j_1,\ldots,j_4=0}^{1}G_1^{j_1}\cdots G_4^{j_4}=\frac{1}{2^7}(\mathbb{I}+G_1)\cdots(\mathbb{I}+G_4),
$$
which is a symmetric mixture of the following eight pure states:\\
$|\varphi_1\rangle=
  |   0	0	0	0	0	0	0 \rangle +
i |   0	0	0	1	0	1	0 \rangle +
i |   0	0	1	0	1	0	1 \rangle +
  |   0	0	1	1	1	1	1 \rangle +
i |   0	1	0	0	1	0	1 \rangle +
  |   0	1	0	1	1	1	1 \rangle +
  |   0	1	1	0	0	0	0 \rangle +
i |   0	1	1	1	0	1	0 \rangle +
i |   1	0	0	0	0	0	1 \rangle -
  |  1	0	0	1	0	1	1 \rangle
- |  1	0	1	0	1	0	0 \rangle +
i |   1	0	1	1	1	1	0 \rangle +
  |   1	1	0	0	1	0	0 \rangle
-i|  1	1	0	1	1	1	0 \rangle
-i|  1	1	1	0	0	0	1 \rangle +
  |   1	1	1	1	0	1	1 \rangle,$

$|\varphi_2\rangle=
|   0	0	0	0	0	0	1 \rangle
-i |   0	0	0	1	0	1	1 \rangle
-i |   0	0	1	0	1	0	0 \rangle
+  |   0	0	1	1	1	1	0 \rangle
+i |   0	1	0	0	1	0	0 \rangle
-  |   0	1	0	1	1	1	0 \rangle
-  |   0	1	1	0	0	0	1 \rangle
+i |   0	1	1	1	0	1	1 \rangle
+i |   1	0	0	0	0	0	0 \rangle
+  |   1	0	0	1	0	1	0 \rangle
+  |   1	0	1	0	1	0	1 \rangle
+i |   1	0	1	1	1	1	1 \rangle
+  |   1	1	0	0	1	0	1 \rangle
+i |   1	1	0	1	1	1	1 \rangle
+i |   1	1	1	0	0	0	0 \rangle
+  |   1	1	1	1	0	1	0 \rangle,$

$|\varphi_3\rangle=
|   0	0	0	0	0	1	0 \rangle
+i |   0	0	0	1	0	0	0 \rangle
+i |   0	0	1	0	1	1	1 \rangle
+ |   0	0	1	1	1	0	1 \rangle
-i |   0	1	0	0	1	1	1 \rangle
- |   0	1	0	1	1	0	1 \rangle
- |   0	1	1	0	0	1	0 \rangle
-i |   0	1	1	1	0	0	0 \rangle
-i |   1	0	0	0	0	1	1 \rangle
+ |   1	0	0	1	0	0	1 \rangle
+ |   1	0	1	0	1	1	0 \rangle
-i |   1	0	1	1	1	0	0 \rangle
+ |   1	1	0	0	1	1	0 \rangle
-i |   1	1	0	1	1	0	0 \rangle
-i |   1	1	1	0	0	1	1 \rangle
+ |   1	1	1	1	0	0	1 \rangle,$

$|\varphi_4\rangle=
|   0	0	0	0	0	1	1 \rangle
-i |   0	0	0	1	0	0	1 \rangle
-i |   0	0	1	0	1	1	0 \rangle
+ |   0	0	1	1	1	0	0 \rangle
-i |   0	1	0	0	1	1	0 \rangle
+ |   0	1	0	1	1	0	0 \rangle
+ |   0	1	1	0	0	1	1 \rangle
-i |   0	1	1	1	0	0	1 \rangle
-i |   1	0	0	0	0	1	0 \rangle
- |   1	0	0	1	0	0	0 \rangle
- |   1	0	1	0	1	1	1 \rangle
-i |   1	0	1	1	1	0	1 \rangle
+ |   1	1	0	0	1	1	1 \rangle
+i |   1	1	0	1	1	0	1 \rangle
+i |   1	1	1	0	0	1	0 \rangle
+ |   1	1	1	1	0	0	0 \rangle,$

$|\varphi_5\rangle=
|   0	0	0	0	1	0	0 \rangle
-i |   0	0	0	1	1	1	0 \rangle
-i |   0	0	1	0	0	0	1 \rangle
+ |   0	0	1	1	0	1	1 \rangle
+i |   0	1	0	0	0	0	1 \rangle
- |   0	1	0	1	0	1	1 \rangle
- |   0	1	1	0	1	0	0 \rangle
+i |   0	1	1	1	1	1	0 \rangle
+i |   1	0	0	0	1	0	1 \rangle
+ |   1	0	0	1	1	1	1 \rangle
+ |   1	0	1	0	0	0	0 \rangle
+i |   1	0	1	1	0	1	0 \rangle
+ |   1	1	0	0	0	0	0 \rangle
+i |   1	1	0	1	0	1	0 \rangle
+i |   1	1	1	0	1	0	1 \rangle
+ |   1	1	1	1	1	1	1 \rangle,$

$|\varphi_6\rangle=
|   0	0	0	0	1	0	1\rangle
+i |   0	0	0	1	1	1	1\rangle
+i |   0	0	1	0	0	0	0\rangle
+ |   0	0	1	1	0	1	0\rangle
+i |   0	1	0	0	0	0	0\rangle
+ |   0	1	0	1	0	1	0\rangle
+ |   0	1	1	0	1	0	1\rangle
+i |   0	1	1	1	1	1	1\rangle
+i |   1	0	0	0	1	0	0\rangle
- |   1	0	0	1	1	1	0\rangle
- |   1	0	1	0	0	0	1\rangle
+i |   1	0	1	1	0	1	1\rangle
+ |   1	1	0	0	0	0	1\rangle
-i |   1	1	0	1	0	1	1\rangle
-i |   1	1	1	0	1	0	0\rangle
+ |   1	1	1	1	1	1	0\rangle,$

$|\varphi_7\rangle=
|   0	0	0	0	1	1	0 \rangle
-i |   0	0	0	1	1	0	0 \rangle
-i |   0	0	1	0	0	1	1 \rangle
+ |   0	0	1	1	0	0	1 \rangle
-i |   0	1	0	0	0	1	1 \rangle
+ |   0	1	0	1	0	0	1 \rangle
+ |   0	1	1	0	1	1	0 \rangle
-i |   0	1	1	1	1	0	0 \rangle
-i |   1	0	0	0	1	1	1 \rangle
- |   1	0	0	1	1	0	1 \rangle
- |   1	0	1	0	0	1	0 \rangle
-i |   1	0	1	1	0	0	0 \rangle
+ |   1	1	0	0	0	1	0 \rangle
+i |   1	1	0	1	0	0	0 \rangle
+i |   1	1	1	0	1	1	1 \rangle
+ |   1	1	1	1	1	0	1 \rangle,$

$|\varphi_8\rangle=
|   0	0	0	0	1	1	1 \rangle
+i |   0	0	0	1	1	0	1 \rangle
+i |   0	0	1	0	0	1	0 \rangle
+ |   0	0	1	1	0	0	0 \rangle
-i |   0	1	0	0	0	1	0 \rangle
- |   0	1	0	1	0	0	0 \rangle
- |   0	1	1	0	1	1	1 \rangle
-i |   0	1	1	1	1	0	1 \rangle
-i |   1	0	0	0	1	1	0 \rangle
+ |   1	0	0	1	1	0	0 \rangle
+ |   1	0	1	0	0	1	1 \rangle
-i |   1	0	1	1	0	0	1 \rangle
+ |   1	1	0	0	0	1	1 \rangle
-i |   1	1	0	1	0	0	1 \rangle
-i |   1	1	1	0	1	1	0 \rangle
+ |   1	1	1	1	1	0	0 \rangle.$
\end{example}

\begin{example}
Case of $N=8$ and $k=4$: We obtain
$$
G_{5\times 8}=\left(\begin{array}{c}
0	0	0	1	1	1	1	1\\
0	0	1	0	2	2	2	2\\
0	1	0	2	0	2	3	3\\
0	2	2	0	1	2	1	3\\
1	0	0	2	2	1	3	2
\end{array}\right),
$$
which leads to a $4$-uniform $8$-qubit mixed state with purity $\frac{1}{8}$.
\end{example}

\begin{example}
Case of $N=9$ and $k=4$: The $4$-uniform $9$-qubit mixed state is obtained from
$$G_{6\times 9}=\left(\begin{array}{c}
0	0	0	0	1	1	1	1	1\\
0	0	0	1	0	2	2	2	2\\
0	0	1	0	2	0	2	3	3\\
0	0	2	2	0	1	2	1	3\\
0	1	0	0	2	2	1	3	2\\
1	0	0	2	2	2	3	3	1
\end{array}\right).$$
The purity of this state is $\frac{1}{8}$.
\end{example}

\begin{example}
Case of $N=8$ and $k=5$: The $5$-uniform $8$-qubit mixed state is obtained from the following matrix,
$$G_{4\times 8}=\left(\begin{array}{c}
0	0	1	1	1	1	1	1\\
0	0	2	2	2	2	2	2\\
1	1	0	0	1	1	2	2\\
2	2	0	0	2	2	3	3
\end{array}\right).$$
The state has a purity $\frac{1}{16}$.
\end{example}

\begin{example}\label{95}
Case of $N=9$ and $k=5$: We have the following matrix
$$G_{6\times 9}=\left(\begin{array}{c}
0	0	0	1	1	1	1	1	1\\
0	0	0	2	2	2	2	2	2\\
0	1	1	0	0	1	1	2	2\\
0	2	2	0	0	2	2	3	3\\
1	0	1	0	3	0	2	2	3\\
2	0	2	0	1	0	3	3	1
\end{array}\right),$$
which gives rise to a $5$-uniform $9$-qubit mixed state $\rho$ with purity $\frac{1}{8}$, $\rho=\frac{1}{2^9}\sum_{j_1,\ldots,j_6=0}^{1}G_1^{j_1}\cdots G_6^{j_6}=\frac{1}{2^9}(\mathbb{I}+G_1)\cdots(\mathbb{I}+G_6)$, where\\
\centerline{$G_1=\sigma_0\otimes\sigma_0\otimes\sigma_0\otimes\sigma_1\otimes\sigma_1\otimes\sigma_1\otimes\sigma_1\otimes\sigma_1\otimes\sigma_1,$}
\centerline{$G_2=\sigma_0\otimes\sigma_0\otimes\sigma_0\otimes\sigma_2\otimes\sigma_2\otimes\sigma_2\otimes\sigma_2\otimes\sigma_2\otimes\sigma_2,$}
\centerline{$G_3=\sigma_0\otimes\sigma_1\otimes\sigma_1\otimes\sigma_0\otimes\sigma_0\otimes\sigma_1\otimes\sigma_1\otimes\sigma_2\otimes\sigma_2,$}
\centerline{$G_4=\sigma_0\otimes\sigma_2\otimes\sigma_2\otimes\sigma_0\otimes\sigma_0\otimes\sigma_2\otimes\sigma_2\otimes\sigma_3\otimes\sigma_3.$}
\centerline{$G_5=\sigma_1\otimes\sigma_0\otimes\sigma_1\otimes\sigma_0\otimes\sigma_3\otimes\sigma_0\otimes\sigma_2\otimes\sigma_2\otimes\sigma_3,$}
\centerline{$G_6=\sigma_2\otimes\sigma_0\otimes\sigma_2\otimes\sigma_0\otimes\sigma_1\otimes\sigma_0\otimes\sigma_3\otimes\sigma_3\otimes\sigma_1.$}
\end{example}

\begin{example}
Case of $N=9$ and $k=6$: The matrix
$$G_{3\times 9}=\left(\begin{array}{c}
0	0	1	1	1	1	1	1	1\\
0	1	0	2	2	2	2	2	2\\
1	2	2	0	0	1	2	3	3
\end{array}\right)$$
satisfies the three conditions of Theorem \ref{gsan} and generates a $6$-uniform $9$-qubit mixed state with purity $\frac{1}{2^6}$.
\end{example}

\section{$k$-Uniform Mixed States Based on Orthogonal Arrays}\label{gw}

In \cite{kmix}, K{\l}obus et al. derived $k$-uniform states using a specific set of generators. In this section, we construct such states with the help of OAs. An {\rm OA}$(r,N,d,k)$ of strength $k$ is an $r \times N$ matrix with entries from $d$ symbols, characterized by the property that any $r \times k$ submatrix contains all possible combinations of $k$ symbols with equal frequency. The OA is called irredundant, denoted as IrOA, if when one removes from the array any $k$ columns, all the remaining $r$ rows, containing $N-k$ symbols each, are different. The Hamming distance between two row vectors in a matrix $A$ is defined to be the number of positions at which they differ. The minimal Hamming distance of $A$, denoted as ${\rm MD}(A)$, is the smallest Hamming distance between any pair of distinct rows \cite{npj}. Suppose that the rows of an OA$(r,N,d,k)$ can be partitioned into $m$ submatrices $\{A_1, A_2, \ldots, A_{m}\}$ such that each $A_i$ is an OA$(r/m,N,d,k_1)$ with $k_1\geq0$. Then the set $\{A_1, A_2, \ldots, A_{m}\}$ is called an orthogonal partition of strength $k_1$ with $m$ blocks of the OA$(r,N,d,k)$ \cite{pang21w}. Utilizing the orthogonal partitions of OAs, we propose a general method for constructing arbitrary dimensional $k$-uniform pure or mixed states, enabling the creation of infinite families of such states. In this study, we do not consider OAs that contain two identical rows to simplify the purity calculations of the derived states. Notably, for $k=1$, the following method consistently produces 1-uniform pure states.

\begin{theorem}\label{1un}
A $1$-uniform pure state of $N$ qudits exists for $N\geq 2$.
\end{theorem}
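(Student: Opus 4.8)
The plan is to exhibit, for every $N\geq 2$ and a suitable local dimension $d$, an explicit $1$-uniform pure state of $N$ qudits; then the conclusion ``for $N\geq 2$'' follows, possibly after observing that a $1$-uniform state of $N$ qudits in dimension $d$ can always be regarded as a $1$-uniform state of $N$ qudits in any larger dimension, or by embedding into a fixed dimension. I would phrase the construction through an \emph{irredundant orthogonal array} of strength $1$, consistent with the machinery set up just above: namely, find an $\mathrm{IrOA}(r,N,d,1)$ with no repeated rows, and invoke the standard correspondence (as used in \cite{npj,goy14}) that the equal-superposition state $|\psi\rangle=\frac{1}{\sqrt r}\sum_{\text{rows }(s_1,\ldots,s_N)}|s_1\cdots s_N\rangle$ is then $1$-uniform, its purity being $1$ precisely because all $r$ rows are distinct.

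The key steps, in order, would be: (1) Take $d=2$ if one wants qubits, or more conveniently $d=N$ (or any $d\geq 2$), and write down the $r\times N$ array whose rows are the ``cyclic'' words $(j,\,j+1,\,j+2,\,\ldots,\,j+N-1)$ with arithmetic mod $d$, for $j=0,1,\ldots,d-1$; here $r=d$. (2) Check strength $1$: in any single column the entries $j+c \pmod d$, $j=0,\ldots,d-1$, run through all $d$ symbols exactly once, so every column is balanced — this is an $\mathrm{OA}(d,N,d,1)$. (3) Check irredundancy and distinctness: deleting any one column leaves the $(N-1)$-tuples still pairwise distinct (already the first surviving coordinate distinguishes the rows), so the array has no repeated rows and is an $\mathrm{IrOA}$. (4) Form $|\psi\rangle$ as above and conclude via the cited IrOA$\Rightarrow$$k$-uniform-purity-$1$ correspondence that $|\psi\rangle$ is a $1$-uniform pure state of $N$ qudits. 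For the smallest dimension, one notes $N=2$ already admits the Bell state, and for general $N$ one may either keep $d=2$ and use the two rows $(0,0,\ldots,0)$ and $(1,1,\ldots,1)$ — this is an $\mathrm{OA}(2,N,2,1)$ with distinct rows — yielding the $N$-qubit GHZ state, which is manifestly $1$-uniform.

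Indeed the cleanest route is the GHZ state: for any $N\geq 2$, $|\mathrm{GHZ}_N\rangle=\tfrac{1}{\sqrt2}\bigl(|0\rangle^{\otimes N}+|1\rangle^{\otimes N}\bigr)$ has every single-party reduced density matrix equal to $\tfrac12\mathbb{I}_2$, hence is $1$-uniform, and it is pure. I would present this as the main line and mention the cyclic-array construction as the higher-dimensional generalization that fits the orthogonal-partition framework of this section. There is no real obstacle here; the only thing to be careful about is the phrasing of the claim — whether ``$1$-uniform pure state of $N$ qudits'' is meant for a specified dimension $d$ or for some dimension — and to make sure the chosen array genuinely has distinct rows so that the purity is exactly $1$ (which is why we exclude arrays with repeated rows, exactly as flagged in the paragraph preceding the theorem). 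The verification that a single column of an $\mathrm{OA}$ of strength $1$ forces the one-party marginal to be maximally mixed is the substantive content, and it is immediate from counting.
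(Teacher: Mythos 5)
Your proposal is correct and takes essentially the same route as the paper: exhibit an ${\rm OA}(d,N,d,1)$ whose minimal Hamming distance is $N\geq 2$, observe that it is therefore irredundant, and invoke the standard IrOA $\Rightarrow$ $k$-uniform pure state correspondence of \cite{npj}. The only difference is that you write the array down explicitly (the cyclic rows, or the two-row GHZ array for $d=2$) where the paper simply cites \cite{npj} for its existence.
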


\begin{proof}
OAs of strength 1 can be constructed trivially. From Ref. \cite{npj} an ${\rm OA}(d,N,d,1)$ exists with the minimal Hamming distance $N-1+1=N\geq2$. Therefore, it is irredundant and corresponds to a 1-uniform pure state of $N$ qudits.
\end{proof}

For $k\geq2$, we present a general construction method of these $k$-uniform states as follows:

\begin{theorem}\label{th}
If the set $\{A_1, A_2, \ldots, A_{m}\}$ is an orthogonal partition of an ${\rm OA}(r,N,d,k)$ and $\text{MD}(A_i) \geq k+1$ for any $i=1,2,\ldots,m$, then when $m=1$, the OA is irredundant and it corresponds to a $k$-uniform pure state. When $m \geq 2$, the resulting $k$-uniform $N$-qudit mixed state given by the symmetric mixture of $m$ pure states from $A_i$ and its purity is $\frac{1}{m}$. In particular, an {\rm OA}$(r,N,d,k)$ leads to an $N$-qudit $k$-uniform mixed state given by the symmetric mixture of $r$ pure states and with purity $\frac{1}{r}$.
\end{theorem}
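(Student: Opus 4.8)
The plan is to handle all $m\geq1$ by a single computation and then read off the three clauses as specializations. For each block $A_i$ (which has $r/m$ rows, all distinct since the OA has no repeated rows) I would attach the pure state $|\phi_i\rangle=(m/r)^{1/2}\sum_{\mathbf a\in A_i}|\mathbf a\rangle$, the sum running over the computational-basis vectors of $(\mathbb C^{d})^{\otimes N}$ labelled by the rows of $A_i$, so that the state to analyze is $\rho=\frac1m\sum_{i=1}^{m}|\phi_i\rangle\langle\phi_i|$. First I would record that distinct blocks carry disjoint sets of row-labels (they partition the rows of the OA), hence $\langle\phi_i|\phi_j\rangle=\delta_{ij}$; therefore $\{(1/m,|\phi_i\rangle)\}_{i=1}^{m}$ is an orthonormal ensemble for $\rho$ and $\mathrm{Tr}(\rho^{2})=\frac1{m^{2}}\sum_{i,j}|\langle\phi_i|\phi_j\rangle|^{2}=\frac1m$. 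This already delivers the purity claims: $1$ when $m=1$ (a pure state) and $\frac1m$ when $m\geq2$.

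The heart of the argument is to show that every $k$-party marginal of $\rho$ is maximally mixed. I would fix a $k$-subset $S\subseteq\{1,\dots,N\}$ with complement $T$, $|T|=N-k$, and write $\mathbf a_S,\mathbf a_T$ for the restrictions of a row $\mathbf a$. The crucial use of the hypothesis is this: since $\mathrm{MD}(A_i)\geq k+1>k=|S|$, any two distinct rows of $A_i$ disagree in at least one position outside $S$, i.e.\ $\mathbf a_T\neq\mathbf a'_T$ whenever $\mathbf a\neq\mathbf a'$ in $A_i$. Hence in $\mathrm{Tr}_T|\phi_i\rangle\langle\phi_i|=\frac mr\sum_{\mathbf a,\mathbf a'\in A_i}\langle\mathbf a'_T|\mathbf a_T\rangle\,|\mathbf a_S\rangle\langle\mathbf a'_S|$ all off-diagonal terms vanish, leaving $\mathrm{Tr}_T|\phi_i\rangle\langle\phi_i|=\frac mr\sum_{\mathbf a\in A_i}|\mathbf a_S\rangle\langle\mathbf a_S|$. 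Summing over $i$ and using the partition, $\rho_S=\frac1r\sum_{\mathbf a}|\mathbf a_S\rangle\langle\mathbf a_S|$ with $\mathbf a$ ranging over all $r$ rows of the OA. Now the strength-$k$ property enters: the columns indexed by $S$ contain each tuple of $\mathbb Z_d^{k}$ exactly $r/d^{k}$ times, so $\sum_{\mathbf a}|\mathbf a_S\rangle\langle\mathbf a_S|=\frac r{d^{k}}\mathbb I_{d^{k}}$ and therefore $\rho_S=\frac1{d^{k}}\mathbb I_{d^{k}}$. Since $S$ was arbitrary, $\rho$ is $k$-uniform.

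Finally I would assemble the three stated conclusions. For $m=1$ the computation gives that $\rho=|\phi_1\rangle\langle\phi_1|$ is a $k$-uniform pure state; in addition $\mathrm{MD}\geq k+1$ shows that after deleting any $k$ columns at most $k$ of the $\geq k+1$ disagreements between two rows are lost, so the rows remain distinct and the OA is irredundant, matching the known correspondence between irredundant OAs and $k$-uniform pure states (see \cite{npj}). For $m\geq2$, $\rho$ is precisely the symmetric mixture of the $m$ block states $|\phi_i\rangle$ with purity $\frac1m$. For the last assertion I would take the partition into singleton blocks $A_i=\{\mathbf a^{(i)}\}$: each is trivially an OA of strength $0$, the condition $\mathrm{MD}(A_i)\geq k+1$ holds vacuously for a one-row array, and the case $m=r$ just proved yields the $N$-qudit $k$-uniform mixed state $\frac1r\sum_i|\mathbf a^{(i)}\rangle\langle\mathbf a^{(i)}|$ with purity $\frac1r$.

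I expect the only genuine obstacle to be the cross-term cancellation in $\mathrm{Tr}_T|\phi_i\rangle\langle\phi_i|$ — recognizing that $\mathrm{MD}(A_i)\geq k+1$ is exactly what forces distinct rows of a block to differ somewhere in $T$, which diagonalizes each block marginal; after that, the decisive observation, namely that the diagonal block pieces reassemble into the full strength-$k$ array and hence give a flat $S$-marginal, is immediate. A minor point to get right is the bookkeeping in the degenerate cases ($m=1$, one-row blocks, and the vacuous reading of $\mathrm{MD}$ there) so that the ``$m=1$'' and ``in particular'' clauses fall out of the same computation.
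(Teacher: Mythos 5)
Your proposal is correct and follows essentially the same route as the paper's proof: the same block states $|\phi_i\rangle$, the same use of $\mathrm{MD}(A_i)\geq k+1$ to kill the off-diagonal terms of each $k$-party marginal, the same appeal to strength $k$ to flatten the diagonal, and the same orthogonality computation giving $\mathrm{Tr}(\rho^2)=\frac{1}{m}$. Your treatment is somewhat more explicit than the paper's (notably in deriving the ``in particular'' clause from the singleton-block partition rather than asserting it), but this is a refinement, not a different argument.
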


\begin{proof}
Let $|\phi_i\rangle$, $i=1,2,\ldots,m$, be $m$ pure states obtained from each $A_i$ of an $N$-qudit system. The associated density matrix $\rho$ reads, $\rho=\frac{1}{m}\sum_{i=1}^m|\phi_i\rangle\langle\phi_i|$. We divide the system into two parts $\mathcal{S}_A$ and $\mathcal{S}_B$, each
containing $N_A=k$ and $N_B=N-k$ qudits, respectively. To determine the density matrix associated with $\mathcal{S}_A$, we consider the reduced state $\rho_A={\rm Tr}_B(\rho)$.

When $m=1$, OA is irredundant and it corresponds to a $k$-uniform pure state \cite{npj}.

When $m\geq2$, we consider the mixed states with maximally mixed reductions, namely, $\rho_A$ is proportional to the identity matrix. Since ${\rm MD}(A_i)\geq k+1$, the sequence of $N_B$ symbols that appears in every row of a subset of $N_B$ columns is unique across each $A_i$. This uniqueness renders the reduced density matrix $\rho_A$ diagonal. In addition, the sequence of $N_A$ symbols appearing in every row of every subset of $N_A$ columns of {\rm OA}$(r,N,d,k)$ is repeated an equal number of times. Consequently, the diagonal entries of $\rho_A$ are all equal.

Since the OA does not contain two identical rows, we have $\langle\phi_i|\phi_j\rangle=0$ for $i\neq j\in \{1,2,\ldots,m\}$. Moreover, as $|\phi_i\rangle$ is a pure state, we also have $\langle\phi_i|\phi_i\rangle=1$ and ${\rm Tr}(\rho_i)={\rm Tr}(|\phi_i\rangle\langle\phi_i|)=1$ for $i\in \{1,2,\ldots,m\}$. Therefore,
\begin{align*}
\rho^2&=(\frac{1}{m}\sum_{i=1}^m|\phi_i\rangle\langle\phi_i|)
(\frac{1}{m}\sum_{j=1}^m|\phi_j\rangle\langle\phi_j|)\\
&=\frac{1}{m^2}\sum_{i=1}^m\sum_{j=1}^m|\phi_i\rangle\langle\phi_i|\cdot|\phi_j\rangle\langle\phi_j|\\
&=\frac{1}{m^2}\sum_{i=1}^m\sum_{j=1}^m|\phi_i\rangle\langle\phi_i|\phi_j\rangle\langle\phi_j|\\
&=\frac{1}{m^2}\sum_{i=1}^m|\phi_i\rangle\langle\phi_i|\phi_i\rangle\langle\phi_i|\\
&=\frac{1}{m^2}\sum_{i=1}^m|\phi_i\rangle\langle\phi_i|.
\end{align*}
Therefore, ${\rm Tr}(\rho^2)={\rm Tr}(\frac{1}{m^2}\sum_{i=1}^m|\phi_i\rangle\langle\phi_i|)=\frac{1}{m}$ and the purity of the state is $\frac{1}{m}$.
\end{proof}

By using Theorem \ref{th}, we can obtain a $k$-uniform pure or mixed states from a suitable OA$(r,N,d,k)$.

\begin{example}
Case of $d=2$, $N=5$ and $k=4$: Consider
$$
A={\rm OA}(16,5,2,4)=
\left( \begin{array}{c}
0 0 0 0 0\\
0 0 0 1 1\\
0 0 1 0 1\\
0 0 1 1 0\\
0 1 0 0 1\\
0 1 0 1 0\\
0 1 1 0 0\\
0 1 1 1 1\\
1 0 0 0 1\\
1 0 0 1 0\\
1 0 1 0 0\\
1 0 1 1 1\\
1 1 0 0 0\\
1 1 0 1 1\\
1 1 1 0 1\\
1 1 1 1 0\\
\end{array}
\right).
$$
Then $A$ yields a $2$-uniform $5$-qubit mixed state that can be expressed as a symmetric mixture of the following $16$ pure states,
$|\Phi_{1}\rangle=
|00000\rangle,$
$|\Phi_{2}\rangle=
|00011\rangle,$
$|\Phi_{3}\rangle=
|00101\rangle,$
$|\Phi_{4}\rangle=
|00110\rangle,$
$|\Phi_{5}\rangle=
|01001\rangle,$
$|\Phi_{6}\rangle=
|01010\rangle,$
$|\Phi_{7}\rangle=
|01100\rangle,$
$|\Phi_{8}\rangle=
|01111\rangle,$
$|\Phi_{9}\rangle=
|10001\rangle,$
$|\Phi_{10}\rangle=
|10010\rangle,$
$|\Phi_{11}\rangle=
|10100\rangle,$
$|\Phi_{12}\rangle=
|10111\rangle,$
$|\Phi_{13}\rangle=
|11000\rangle,$
$|\Phi_{14}\rangle=
|11011\rangle,$
$|\Phi_{15}\rangle=
|11101\rangle,$
$|\Phi_{16}\rangle=
|11110\rangle.$ The purity of this state is $\frac{1}{16}$, which attains the highest purity \cite{kmix}.
\end{example}

\begin{example}\label{273}
Case of $d=2$, $N=7$ and $k=3$: Let
$$A_1={\rm OA}(8,7,2,2)=\left( \begin{array}{c}
0 0 0 0 0 0 0\\
0 0 0 1 1 1 1\\
0 1 1 0 0 1 1\\
0 1 1 1 1 0 0\\
1 0 1 0 1 0 1\\
1 0 1 1 0 1 0\\
1 1 0 0 1 1 0\\
1 1 0 1 0 0 1
\end{array}
\right)$$ and
$$A_2=1\oplus A_1=\left( \begin{array}{c}
1 1 1 1 1 1 1\\
1 1 1 0 0 0 0\\
1 0 0 1 1 0 0\\
1 0 0 0 0 1 1\\
0 1 0 1 0 1 0\\
0 1 0 0 1 0 1\\
0 0 1 1 0 0 1\\
0 0 1 0 1 1 0\\
\end{array}
\right)
$$
where $\oplus$ denotes the Kronecker sum \cite{npj}. Then $\left(\begin{array}{c}
A_1\\
A_2
\end{array}\right)$ is an ${\rm OA}(16,7,2,3)$ \cite{hss}. Since ${\rm MD}(A_1)={\rm MD}(A_2)=4$ \cite{muk95}, according to Theorem \ref{th} we have the state from the symmetric mixture of two pure states $|\Phi_{1}\rangle=\frac{1}{2\sqrt{2}}(
|0000000\rangle+
|0001111\rangle+
|0110011\rangle+
|0111100\rangle+
|1010101\rangle+
|1011010\rangle+
|1100110\rangle+
|1101001\rangle)$ and
$|\Phi_{2}\rangle=\frac{1}{2\sqrt{2}}(|11111111\rangle+
|1110000\rangle+
|1001100\rangle+
|1000011\rangle+
|0101010\rangle+
|0100101\rangle+
|0011001\rangle+
|0010110\rangle)$. The purity of this $3$-uniform $7$-qubit mixed state is $\frac{1}{2}$.
\end{example}

Note that for the case of $d=2$, $N=5$ and $k=4$ in Example 8 ($d=2$, $N=7$ and $k=3$ in Example 9), one may also obtain such $k$-uniform states in terms of the approach in \cite{kmix}. Nevertheless, our approach provides states whose coefficients are all $1$, which are simpler than the ones obtained via the approach in \cite{kmix}.

Let $\mathcal{A}$ be an abelian group of order $d$. $\mathcal{A}^k$, $k\geq1$, denotes the additive group of order $d^k$ consisting of all $k$-tuples of entries from $\mathcal{A}$ with the usual vector addition as the binary operation. Let $\mathcal{A}_0^k=\{(x_1,x_2,\ldots,x_k):x_1=\cdots=x_k\in\mathcal{A}\}$. Accordingly, $\mathcal{A}_0^k$ is a subgroup of $\mathcal{A}^k$ of order $d$, with its cosets denoted by $\mathcal{A}_i^k$, $i=1,\ldots,d^{k-1}-1$. An $r\times N$ array $D$ based on $\mathcal{A}$ is a difference scheme of strength $k$ if, for every $r\times k$ subarray, each set $\mathcal{A}_i^k$, $i=0,1,\ldots,d^{k-1}-1$, is represented equally often when the rows of the subarray are viewed as elements of $\mathcal{A}^k$. Such an array is denoted by $D_k(r,N,d)$. The difference schemes are simple yet effective tools in constructing OAs whose Hamming distances possess specific properties.

\begin{theorem}\label{cz}
A $k$-uniform $N$-qudit mixed state with purity $\frac{1}{r}$ can be obtained from a difference scheme $D_k(r,N,d)$ with $k<N$,
\end{theorem}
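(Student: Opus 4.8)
The plan is to realize the difference scheme as an orthogonal array carrying an orthogonal partition into exactly $r$ blocks, and then to invoke Theorem~\ref{th} with $m = r$. The essential point, which distinguishes the correct construction from the naive one, is that the right blocks are the $r$ orbits of the diagonal translation action on the rows, each of size $d$ --- not the $d$ translate-copies of $D$, each of size $r$.

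For $g \in \mathcal{A}$ write $\mathbf{g} = (g, g, \ldots, g) \in \mathcal{A}^N$, and let $\widetilde{D}$ be the $rd \times N$ array formed by stacking the $d$ translates $D + \mathbf{g}$. First I would show that $\widetilde{D}$ is an ${\rm OA}(rd, N, d, k)$. Fix any $k$ columns $c_1, \ldots, c_k$ and a target $t = (t_1, \ldots, t_k) \in \mathcal{A}^k$. A translated row agrees with $t$ in these columns iff the underlying row of $D$ equals $t - \mathbf{g}$; as $g$ runs over $\mathcal{A}$, the tuples $t - \mathbf{g}$ run over exactly the $d$ members of the coset $\mathcal{A}_i^k$ containing $t$. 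Hence the multiplicity of $t$ in $\widetilde{D}$ equals the total number of times $\mathcal{A}_i^k$ is represented in the $r \times k$ subarray of $D$, which by the defining property of the difference scheme is the constant $r/d^{k-1}$ independent of $i$, hence of $t$. So every $k$-tuple occurs equally often and $\widetilde{D}$ has strength $k$.

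Next I would partition $\widetilde{D}$ by the orbits $A_i = \{a^{(i)} + \mathbf{g} : g \in \mathcal{A}\}$, one per row $a^{(i)}$ of $D$. Each $A_i$ has $d$ rows, and in any single column its entries are $\{a^{(i)}_j + g : g \in \mathcal{A}\} = \mathcal{A}$, so $A_i$ is an ${\rm OA}(d, N, d, 1)$ and $\{A_1, \ldots, A_r\}$ is an orthogonal partition of strength $k_1 = 1 \geq 0$. Any two rows of $A_i$ differ by a nonzero constant $\mathbf{g} - \mathbf{g}'$ in every coordinate, so ${\rm MD}(A_i) = N \geq k+1$ because $k < N$. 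Under the paper's standing assumption that $\widetilde{D}$ has no repeated rows --- equivalently, that no two rows of $D$ differ by a global constant shift --- these $r$ orbits are pairwise disjoint, so the partition genuinely has $m = r$ blocks.

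With the hypotheses of Theorem~\ref{th} verified, it applies with $m = r$: the $r$ blocks, each of minimal Hamming distance $N$, yield $r$ pairwise orthogonal pure states whose symmetric mixture is a $k$-uniform $N$-qudit state of purity $\frac{1}{r}$. I expect the main obstacle to be the strength-$k$ verification of $\widetilde{D}$, since it is the only step that uses the coset-balance definition of the difference scheme and must be phrased through the cosets $\mathcal{A}_i^k$ rather than through individual $k$-tuples. Once that equidistribution is in place, the orbit structure forces ${\rm MD}(A_i) = N$ and fixes the number of blocks at $r$, so the claimed purity $\frac{1}{r}$ is immediate.
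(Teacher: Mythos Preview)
Your proposal is correct and follows essentially the same approach as the paper: the array $\widetilde{D}$ you build is exactly the paper's $D_k(r,N,d)\oplus\mathbf{(d)}$, and your orbit blocks $A_i$ are precisely the paper's $A_i=\mathbf{a}_i\oplus\mathbf{(d)}$, after which both arguments invoke Theorem~\ref{th} with $m=r$. You supply more detail than the paper does---in particular a full verification, via the coset-balance definition, that $\widetilde{D}$ has strength $k$, where the paper simply asserts that $D\oplus\mathbf{(d)}$ is an OA---but the construction and the key lemma are identical.
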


\begin{proof}
Let $$D_k(r,N,d)=\left(\begin{array}{c}
{\bf a}_1\\
{\bf a}_2\\
\vdots\\
{\bf a}_r\\
\end{array}\right)$$ and $A_i={\bf a}_i\oplus {\bf (d)}$ where ${\bf (d)}=(0,1,\ldots,d-1)^T$ for $i=1,2,\ldots,r$. Obviously, {\rm MD}$(A_i)=N\geq k+1$ and $D_k(r,N,d)\oplus {\bf (d)}$ is an OA. From every $A_i$, we obtain an $N$-qudit pure state with $d$ terms. Therefore, by Theorem \ref{th}, the symmetric mixture of these $r$ pure states yields a $k$-uniform mixed state of $N$ qudits with purity $\frac{1}{r}$.
\end{proof}

From the existing difference schemes and Theorem \ref{cz}, we have the following results.

\begin{corollary}\label{kjia1}
(1) There exist $3$-uniform $4$-qudit mixed states with purity $\frac{1}{d^2}$ for any integer $d\geq2$.

(2) If $d\neq 2\, (mod \ 4)$, there exist $k$-uniform $k+1$-qudit mixed states with purity $\frac{1}{d^{k-1}}$.

(3) If $d\leq k$ and odd $k$, there exist $k$-uniform $k+1$-qudit mixed states with purity $\frac{1}{d^{k-1}}$.

(4) If $d$ is an odd prime power, there exist $3$-uniform $d+1$-qudit mixed states with purity $\frac{1}{d^2}$. If $d$ is an even prime power, there exist $3$-uniform $d+2$-qudit mixed states with purity $\frac{1}{d^2}$.
\end{corollary}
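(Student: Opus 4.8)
The plan is to deduce all four parts from Theorem~\ref{cz}. Since a difference scheme $D_k(r,N,d)$ with $k<N$ produces a $k$-uniform $N$-qudit mixed state of purity $1/r$, it suffices to supply, for each item, a difference scheme with the required parameters: item~(1) needs a $D_3(d^2,4,d)$; items~(2) and~(3) need a $D_k(d^{k-1},k+1,d)$; item~(4) needs a $D_3(d^2,d+1,d)$ for $d$ an odd prime power and a $D_3(d^2,d+2,d)$ for $d$ an even prime power. In every case the inequality $k<N$ holds under the stated hypotheses, and substituting $r=d^2$ or $r=d^{k-1}$ into Theorem~\ref{cz} yields exactly the asserted purities. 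All the schemes above have $r=d^{k-1}$, so in any $k$-column subarray each coset of $\mathcal{A}_0^k$ occurs exactly once; this forces the $r$ rows to lie in distinct cosets of $\mathcal{A}_0^N$ (otherwise two rows would agree, up to a constant vector, on every $k$ columns), so $D\oplus{\bf (d)}$ has no repeated rows and the purity computation behind Theorems~\ref{th} and~\ref{cz} applies verbatim.

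For item~(1) I would exhibit a single construction valid for every integer $d\ge 2$: index the $d^2$ rows by $(a,b)\in\mathbb{Z}_d^2$ and let the four columns carry the entries $0,\ a,\ b,\ a+b$. Identifying $\mathcal{A}^3/\mathcal{A}_0^3$ with $\mathbb{Z}_d^2$ through $[(v_1,v_2,v_3)]\mapsto(v_2-v_1,\,v_3-v_1)$, strength~$3$ follows by checking that for each of the four triples of columns the induced map $\mathbb{Z}_d^2\to\mathbb{Z}_d^2$ is bijective: the triples $\{1,2,3\},\{1,2,4\},\{1,3,4\}$ send $(a,b)$ to $(a,b)$, $(a,a+b)$, $(b,a+b)$ respectively, and $\{2,3,4\}$ sends $(a,b)$ to $(b-a,b)$. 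Hence $r=d^2$ and the purity is $1/d^2$ (this $D_3(d^2,4,d)$ may equally be quoted from the difference-scheme literature).

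For items~(2) and~(3) I would invoke the known existence of strength-$k$ difference schemes with $k+1$ columns and $d^{k-1}$ rows: a $D_k(d^{k-1},k+1,d)$ exists whenever $d\not\equiv 2\pmod 4$, and also whenever $d\le k$ with $k$ odd, which are precisely the two hypotheses; Theorem~\ref{cz} then returns states of purity $1/d^{k-1}$. For item~(4), with $d$ a prime power I would build the scheme over $\mathrm{GF}(d)$ from an arc in the affine plane $\mathrm{AG}(2,d)$: choose $v_1,\dots,v_n\in\mathrm{GF}(d)^2$ with no three collinear --- an elliptic conic ($n=d+1$) when $d$ is odd, an affine hyperoval ($n=d+2$) when $d$ is even --- index the $d^2$ rows by $(a,b)\in\mathrm{GF}(d)^2$, and place $\langle(a,b),v_j\rangle$ in column~$j$. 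For any three columns the $2\times 2$ matrix of pairwise differences of the three corresponding $v_j$ is nonsingular exactly because those points are not collinear, so the coset map is a bijection and the array is a $D_3(d^2,n,d)$; Theorem~\ref{cz} then delivers purity $1/d^2$.

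The main obstacle is items~(2) and~(3): unlike (1) and (4), these are not one-line explicit constructions but rest on the known, number-theoretically constrained existence theory of strength-$k$ difference schemes with only $k+1$ columns, where the residue class $d\equiv 2\pmod 4$ is genuinely exceptional. I would make sure the cited existence statements carry exactly the parameters $D_k(d^{k-1},k+1,d)$, and would re-confirm --- as noted above --- that the implicit "no two equal rows" hypothesis of Theorems~\ref{th} and~\ref{cz} is automatically satisfied; everything else is parameter bookkeeping.
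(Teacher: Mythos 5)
Your proposal is correct and follows essentially the same route as the paper: both reduce all four items to Theorem~\ref{cz} by supplying a difference scheme $D_3(d^2,4,d)$, $D_k(d^{k-1},k+1,d)$, or $D_3(d^2,d+1,d)$/$D_3(d^2,d+2,d)$ with the required parameters, the paper simply citing the existence results from the difference-scheme literature where you additionally write out explicit constructions for items~(1) and~(4) and verify the no-repeated-rows condition. These additions are sound but do not change the argument's structure.
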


\begin{proof}
From the existence of difference schemes \cite{chen17,zx22}, we have infinitely uniform mixed states by Theorem \ref{cz}.

(1) There exists a difference scheme $D_3(d^{2},4,d)$ for any $d$.

(2) There exists a difference scheme $D_k(d^{k-1},k+1,d)$ for $d\neq 2 (mod \ 4)$.

(3) If $d\leq k$ and $k$ is odd, a difference scheme $D_k(d^{k-1},k+1,d)$ exists.

(4) If $d$ is an odd prime power, a difference scheme $D_3(d^2,d+1,d)$ exists. If $d$ is an even prime power, a difference scheme $D_3(d^2,d+2,d)$ exists.
\end{proof}

\begin{example}\label{zg}
From Corollary \ref{kjia1} (3), $3$-uniform $4$-qubit mixed state with purity $\frac{1}{4}$ and $4$-uniform $5$-qubit mixed state with purity $\frac{1}{16}$ exist, all of which attain the highest purity \cite{kmix}.
\end{example}

\begin{example}\label{18}
From a difference scheme $D_3(18,5,3)$ \cite{hjg,h97} and Theorem \ref{cz}, we obtain a $3$-uniform mixed state of $5$-qutrit with purity $\frac{1}{18}$.
\end{example}

\begin{theorem}\label{dx}
If there exists an ${\rm OA}(r,N,d,k)$ with a minimal Hamming distance of $k+1$, we obtain an $(N-x)$-qudit $k$-uniform mixed state. The state can be expressed as a symmetric mixture of $d^x$ pure states for $1\leq x\leq k$ and $N-x>k$. The purity of the state is $\frac{1}{d^x}$.
\end{theorem}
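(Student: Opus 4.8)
The plan is to take an ${\rm OA}(r,N,d,k)$ whose minimal Hamming distance equals $k+1$ and to exhibit a natural orthogonal partition of it into $d^x$ sub-arrays, each of which is an ${\rm OA}$ of strength at least $k$ on $N-x$ columns with minimal Hamming distance still at least $k+1$; then Theorem \ref{th} immediately gives a $k$-uniform $(N-x)$-qudit mixed state that is the symmetric mixture of $d^x$ pure states with purity $\frac{1}{d^x}$. Concretely, I would single out $x$ of the $N$ columns, call them the ``index columns,'' and partition the $r$ rows of the OA according to the value of the $x$-tuple appearing in those index columns. Because the OA has strength $k\geq x$, every one of the $d^x$ possible $x$-tuples occurs in exactly $r/d^x$ rows, so all blocks have the same size; discarding the $x$ index columns from each block yields $d^x$ arrays $A_1,\dots,A_{d^x}$, each of size $(r/d^x)\times(N-x)$.

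The key steps are then: (i) verify that each $A_i$ has strength $\geq k$. Fix $k$ of the remaining $N-x$ columns; together with the $x$ index columns these form $k+x\leq 2k$ columns, but more to the point the $k$ chosen columns together with the $x$ frozen columns form a set of at most $N$ columns, and by strength $k$ of the original OA (applied to any $k$ of these, in particular the chosen $k$) each $k$-tuple appears equally often \emph{within} the subset of rows having a fixed index value --- this is exactly the standard fact that fixing coordinates of a strength-$k$ OA at columns disjoint from the $k$ we are testing still leaves a strength-$(k-?)$ structure; I will need $x\le k$ here so that ``$k$ test columns $+$ $x$ frozen columns'' is controlled, and I should argue it carefully (the clean statement: an ${\rm OA}(r,N,d,k)$ restricted to rows with a fixed value on $x\le k$ columns, with those columns deleted, is an ${\rm OA}(r/d^x,N-x,d,k-x)$; but for the $k$-uniform conclusion via Theorem \ref{th} I actually only need strength $k$ of the \emph{union} $\bigcup A_i$ together with ${\rm MD}(A_i)\ge k+1$, and the union is just the original OA with $x$ columns deleted, which has strength $\min(k,N-x)=k$ since $N-x>k$). (ii) verify ${\rm MD}(A_i)\geq k+1$: two distinct rows of $A_i$ agree on all $x$ index columns, so if they differed in fewer than $k+1$ of the remaining $N-x$ columns they would differ in fewer than $k+1$ columns of the original OA, contradicting ${\rm MD}=k+1$; hence ${\rm MD}(A_i)\ge {\rm MD}({\rm original})=k+1$, and in particular the rows of $A_i$ are pairwise distinct and $\bigcup A_i$ has no repeated rows. (iii) apply Theorem \ref{th} to the array $\bigcup_i A_i$ (an ${\rm OA}(r,N-x,d,k)$ with the stated orthogonal partition and ${\rm MD}(A_i)\ge k+1$): since $d^x\ge 2$ for $x\ge 1$, we land in the $m\ge 2$ case and obtain purity $\frac{1}{d^x}$ with the symmetric mixture of $d^x$ pure states.

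The main obstacle I anticipate is bookkeeping in step (i): one must be careful that deleting the $x$ index columns does not destroy the strength needed for Theorem \ref{th}. The cleanest route is to avoid claiming each individual $A_i$ has strength $k$ (which is false in general --- it typically has strength only $k-x$) and instead observe that Theorem \ref{th} is invoked on the \emph{whole} partitioned array $\bigcup_i A_i = {\rm OA}(r,N,d,k)$ with columns deleted; since $N-x>k$, any $k$ of the surviving columns were already $k$ columns of the original OA, so $\bigcup_i A_i$ genuinely has strength $k$, and the hypothesis ${\rm MD}(A_i)\ge k+1$ is what Theorem \ref{th} requires of the blocks. A secondary point to state explicitly is why the $d^x$ blocks are nonempty and equinumerous --- this is precisely strength $x\le k$ of the original array applied to the $x$ index columns --- and why the constraint $1\le x\le k$ is needed (for equinumerous blocks) together with $N-x>k$ (so the reduced array still has enough columns to carry strength $k$ and to be nontrivial).
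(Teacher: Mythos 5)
Your proposal is correct and follows essentially the same route as the paper: the paper also partitions the rows according to the symbols in the first $x$ columns, deletes those columns to obtain blocks that are ${\rm OA}(r/d^x,N-x,d,k-x)$ arrays with minimal Hamming distance still at least $k+1$, and then invokes Theorem~\ref{th}. Your write-up is in fact more careful than the paper's on the two points that matter --- that only the \emph{union} of the blocks needs strength $k$ (guaranteed by $N-x>k$), while each block individually need only be an OA of some nonnegative strength with ${\rm MD}\ge k+1$.
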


\begin{proof}
If an ${\rm OA}(r,N,d,k)$ with a minimal Hamming distance of $k+1$ exists, then taking the runs in the OA that begin with 0 (or any other particular symbol) and omitting the first column, we obtain an ${\rm OA}(\frac{r}{d},N-1,d,k-1)$ with minimal Hamming distance $\geq k+1$. Then there exists an orthogonal partition of strength $k-1$ of the last $N-1$ columns of ${\rm OA}(r,N,d,k)$. Therefore, by Theorem \ref{th} we obtain an $(N-1)$-qudit $k$-uniform mixed state that can be expressed as a symmetric mixture of $d$ pure states. The purity of the state is $\frac{1}{d}$.

Similarly, considering the runs in the ${\rm OA}(r,N,d,k)$ that begin with $(0,0)$ (or any other particular symbols) and omitting the first two columns, we have an ${\rm OA}(\frac{r}{d^2},N-2,d,k-2)$ with minimal Hamming distance greater or equal $k+1$. Therefore, we obtain an $(N-2)$-qudit $k$-uniform mixed state from the symmetric mixture of $d^2$ pure states by omitting the first two columns. The other cases can be proved similarly.
\end{proof}

\begin{corollary}\label{35}
If ${\text gcd}\{d,4\}\neq2$ and ${\text gcd}\{d,18\}\neq3$ where ${\text gcd}$ refers to the greatest common divisor, a $5$-qudit $3$-uniform mixed states exists and its purity is $\frac{1}{d}$.
\end{corollary}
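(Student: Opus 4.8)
The plan is to obtain this from Theorem~\ref{dx} with $N=6$, $k=3$ and $x=1$. It suffices to produce, for each $d$ with $\gcd\{d,4\}\ne2$ and $\gcd\{d,18\}\ne3$, an ${\rm OA}(r,6,d,3)$ of minimal Hamming distance $k+1=4$: since $1\le x=1\le k=3$ and $N-x=5>3=k$, Theorem~\ref{dx} then returns an $(N-x)=5$-qudit $3$-uniform mixed state that is a symmetric mixture of $d^{x}=d$ pure states with purity $1/d^{x}=1/d$, which is exactly the claim. (As ${\rm MD}\ge4>0$, the array has no repeated rows, so the ``no two identical rows'' hypothesis behind Theorems~\ref{th}--\ref{dx} holds automatically.) In this way the corollary is reduced to a purely combinatorial existence statement for that orthogonal array.

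For the array I would argue through orthogonal arrays coming from MDS codes and difference schemes. First, any ${\rm OA}(r,6,d,3)$ obeys the Singleton-type bound ${\rm MD}\le N-k+1=4$, so ``${\rm MD}=4$'' just means the array is irredundant, and the cleanest irredundant examples are the index-unity arrays ${\rm OA}(d^{3},6,d,3)$. For a prime power $q\ge4$ such an array is a $[6,3,4]_{q}$ MDS array (a $6$-arc in $PG(2,q)$; the hexacode when $q=4$), which can also be realised by expanding the corresponding strength-$3$ difference scheme by $\oplus(0,1,\ldots,q-1)^{T}$ as in the proof of Theorem~\ref{cz}: runs in the same block differ on all $6$ coordinates, and runs in distinct blocks, being distinct codewords, differ on at least $4$, so the whole array keeps ${\rm MD}=4$. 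The remaining $d$ are then handled by the difference-scheme constructions of \cite{chen17,zx22}, together with the Kronecker-product arguments already used for Corollary~\ref{kjia1}, which provide a strength-$3$ difference scheme on at least $6$ columns over $d$ symbols with the MDS-type distance property for exactly the residues admitted here.

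The genuine obstacle is precisely this existence input: verifying that the conditions $\gcd\{d,4\}\ne2$ and $\gcd\{d,18\}\ne3$ coincide with the known range of existence for a suitable strength-$3$ difference scheme (equivalently, an irredundant ${\rm OA}$) on $6$ columns over $d$ symbols. The excluded classes are genuine for this route: $[6,3,4]_{2}$ violates the Griesmer bound and $[6,3,4]_{3}$ the arc bound in $PG(2,3)$, while Singleton caps the number of runs at $d^{3}$ so these cannot be evaded by a larger array; these local failures then force the gaps at every $d\equiv2\pmod4$ and at every odd $d$ divisible by $3$ but not by $9$. Once the array on $6$ columns is secured, Theorem~\ref{dx} with $x=1$ delivers the desired $5$-qudit $3$-uniform mixed state of purity $1/d$ at once.
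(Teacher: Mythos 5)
Your reduction is exactly the one the paper uses: take an ${\rm OA}(d^3,6,d,3)$, note that index unity and strength $3$ force ${\rm MD}=6-3+1=4$ (the paper quotes this from \cite{npj}), and apply Theorem~\ref{dx} with $N=6$, $k=3$, $x=1$ to get a $5$-qudit $3$-uniform mixed state of purity $1/d$. That part is correct and identical in spirit to the paper's proof, which consists of one sentence citing \cite{yin10} for the existence of the array.

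The gap is in your existence argument for the array, which is the entire content of the corollary once the reduction is made. Your construction covers prime powers $q\ge4$ (MDS codes / difference schemes) and, via Kronecker products, any $d$ all of whose prime-power factors are at least $4$. But the hypotheses $\gcd\{d,4\}\ne2$ and $\gcd\{d,18\}\ne3$ admit values such as $d=12,24,48,60,\dots$, i.e.\ $d$ divisible by $4$ and by $3$ but not by $9$. For these the product route breaks down: it would require an ${\rm OA}(27,6,3,3)$, equivalently a $[6,3,4]_3$ MDS code, which you yourself note does not exist, and the difference-scheme products in \cite{chen17,zx22} founder on the same $3$-symbol factor. Your closing inference is also backwards: nonexistence over a divisor $d_1$ of $d$ does not propagate to nonexistence over $d$ (compare MOLS of order $6$ versus order $12$), so the excluded residues cannot be ``forced'' by the local failures at $d=2,3$; conversely, the admissible residues with a bare factor of $3$ cannot be handled by products at all. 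What actually fills this hole is the dedicated recursive construction of Ji and Yin \cite{yin10}, which builds ${\rm OA}(d^3,6,d,3)$ for precisely these $d$ by non-product means; the paper's proof is simply that citation plus the ${\rm MD}=4$ observation plus Theorem~\ref{dx}. As written, your argument proves the corollary only for $d$ whose prime-power factors are all at least $4$.
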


\begin{proof}
If ${\text gcd}\{d,4\}\neq2$ and ${\text gcd}\{d,18\}\neq3$, an ${\rm OA}(d^3,6,d,3)$ \cite{yin10} has minimal Hamming distance $4$ from Ref. \cite{npj}. Accordingly, the result can be proved by Theorem \ref{dx}.
\end{proof}

\begin{example}\label{64}
Case of $d=4$, $k=3$ and $N=4,5$: We can obtain a $3$-uniform mixed state of
$4$-ququart with purity $\frac{1}{16}$ and a $3$-uniform mixed state of $5$-ququart with purity $\frac{1}{4}$.

A difference scheme $D_3(16,6,4)$ exists \cite{zx22,hjg} and can be written as
$$[{\bf 0}_{16},A]=\left(\begin{array}{ccc}
&{\bf a}_1&a_{1,5}\\
{\bf 0}_{16}&\vdots&\vdots\\
&{\bf a}_{16}&a_{16,5}
\end{array}\right)=\left(\begin{array}{c}
0	0	0	0	0	0\\
0	0	1	2	1	2\\
0	0	2	3	2	3\\
0	0	3	1	3	1\\
0	1	0	1	2	2\\
0	1	1	3	3	0\\
0	1	2	2	0	1\\
0	1	3	0	1	3\\
0	2	0	2	3	3\\
0	2	1	0	2	1\\
0	2	2	1	1	0\\
0	2	3	3	0	2\\
0	3	0	3	1	1\\
0	3	1	1	0	3\\
0	3	2	0	3	2\\
0	3	3	2	2	0
\end{array}\right),
$$
where ${\bf 0}_{16}$ is a $16\times1$ vector with all elements equal to zero and ${\bf a}_i$ is a $1\times4$ vector for $i=1,2,\ldots,16$. Therefore $D_3(16,6,4)\oplus {\bf (4)}$ is an ${\rm IrOA}(64,6,4,3)$ corresponding to a $3$-uniform pure state of $6$-ququart. Let
$$
A_1={\bf a}_1\oplus {\bf (4)}=\left(\begin{array}{c}
0000\\
1111\\
2222\\
3333
\end{array}\right),
$$
$$A_2={\bf a}_2\oplus {\bf (4)}=\left(\begin{array}{c}
0121\\
1030\\
2303\\
3212
\end{array}\right),
\ldots,$$
and
$$
A_{16}={\bf a}_{16}\oplus {\bf (4)}=\left(\begin{array}{c}
3322\\
2233\\
1100\\
0011
\end{array}\right).
$$
${\rm MD}(A_1)=\cdots={\rm MD}(A_{16})=4$. Therefore, we have a $3$-uniform $4$-ququart mixed state that can be expressed as a symmetric mixture of the following $16$ pure states,\\
$|\Phi_{1}\rangle=\frac{1}{2}(
|0000\rangle+
|1111\rangle+
|2222\rangle+
|3333\rangle),$\\
$|\Phi_{2}\rangle=\frac{1}{2}(
|	0	1	2	1	\rangle+
|	1	0	3	0	\rangle+
|	2	3	0	3	\rangle+
|	3	2	1	2	\rangle),$\\
$|\Phi_{3}\rangle=\frac{1}{2}(
|	0	2	3	2	\rangle+
|	1	3	2	3	\rangle+
|	2	0	1	0	\rangle+
|	3	1	0	1	\rangle),$\\
$|\Phi_{4}\rangle=\frac{1}{2}(
|	0	3	1	3\rangle+
|	1	2	0	2\rangle+
|	2	1	3	1\rangle+
|	3	0	2	0\rangle),$\\
$|\Phi_{5}\rangle=\frac{1}{2}(
|	1	0	1	2\rangle+
|	0	1	0	3\rangle+
|	3	2	3	0\rangle+
|	2	3	2	1\rangle),$\\
$|\Phi_{6}\rangle=\frac{1}{2}(
|	1	1	3	3\rangle+
|	0	0	2	2\rangle+
|	3	3	1	1\rangle+
|	2	2	0	0\rangle),$\\
$|\Phi_{7}\rangle=\frac{1}{2}(
|	1	2	2	0\rangle+
|	0	3	3	1\rangle+
|	3	0	0	2\rangle+
|	2	1	1	3\rangle),$\\
$|\Phi_{8}\rangle=\frac{1}{2}(
|	1	3	0	1\rangle+
|	0	2	1	0\rangle+
|	3	1	2	3\rangle+
|	2	0	3	2\rangle),$\\
$|\Phi_{9}\rangle=\frac{1}{2}(
|	2	0	2	3\rangle+
|	3	1	3	2\rangle+
|	0	2	0	1\rangle+
|	1	3	1	0\rangle),$\\
$|\Phi_{10}\rangle=\frac{1}{2}(
|	2	1	0	2\rangle+
|	3	0	1	3\rangle+
|	0	3	2	0\rangle+
|	1	2	3	1\rangle),$\\
$|\Phi_{11}\rangle=\frac{1}{2}(
|	2	2	1	1\rangle+
|	3	3	0	0\rangle+
|	0	0	3	3\rangle+
|	1	1	2	2\rangle),$\\
$|\Phi_{12}\rangle=\frac{1}{2}(
|	2	3	3	0\rangle+
|	3	2	2	1\rangle+
|	0	1	1	2\rangle+
|	1	0	0	3\rangle),$\\
$|\Phi_{13}\rangle=\frac{1}{2}(
|	3	0	3	1\rangle+
|	2	1	2	0\rangle+
|	1	2	1	3\rangle+
|	0	3	0	2\rangle),$\\
$|\Phi_{14}\rangle=\frac{1}{2}(
|	3	1	1	0\rangle+
|	2	0	0	1\rangle+
|	1	3	3	2\rangle+
|	0	2	2	3\rangle),$\\
$|\Phi_{15}\rangle=\frac{1}{2}(
|	3	2	0	3\rangle+
|	2	3	1	2\rangle+
|	1	0	2	1\rangle+
|	0	1	3	0\rangle),$\\
$|\Phi_{16}\rangle=\frac{1}{2}(
|	3	3	2	2\rangle+
|	2	2	3	3\rangle+
|	1	1	0	0\rangle+
|	0	0	1	1\rangle).$\\
The purity of this $3$-uniform mixed state is $\frac{1}{16}$.

Similarly, by Theorem \ref{dx}, $A\oplus {\bf (4)}$ is an ${\rm OA}(64,5,4,3)$ and {\rm MD}$(A)=4$. From $A_1=A$, $A_2=1\oplus A$, $A_3=2\oplus A$ and $A_4=3\oplus A$, we obtain a $3$-uniform mixed state of $5$-ququart with purity $\frac{1}{4}$, which is given by the symmetric mixture of the following four pure states:\\ $|\Phi_{1}\rangle=\frac{1}{4}(
|0	0	0	0	0\rangle+
|0	1	2	1	2\rangle+
|0	2	3	2	3\rangle+
|0	3	1	3	1\rangle+
|1	0	1	2	2\rangle+
|1	1	3	3	0\rangle+
|1	2	2	0	1\rangle+
|1	3	0	1	3\rangle+
|2	0	2	3	3\rangle+
|2	1	0	2	1\rangle+
|2	2	1	1	0\rangle+
|2	3	3	0	2\rangle+
|3	0	3	1	1\rangle+
|3	1	1	0	3\rangle+
|3	2	0	3	2\rangle+
|3	3	2	2	0\rangle),$\\
$|\Phi_{2}\rangle=\frac{1}{4}(
|1	1	1	1	1\rangle+
|1	0	3	0	3\rangle+
|1	3	2	3	2\rangle+
|1	2	0	2	0\rangle+
|0	1	0	3	3\rangle+
|0	0	2	2	1\rangle+
|0	3	3	1	0\rangle+
|0	2	1	0	2\rangle+
|3	1	3	2	2\rangle+
|3	0	1	3	0\rangle+
|3	3	0	0	1\rangle+
|3	2	2	1	3\rangle+
|2	1	2	0	0\rangle+
|2	0	0	1	2\rangle+
|2	3	1	2	3\rangle+
|2	2	3	3	1\rangle),$\\
$|\Phi_{3}\rangle=\frac{1}{4}(
|2	2	2	2	2\rangle+
|2	3	0	3	0\rangle+
|2	0	1	0	1\rangle+
|2	1	3	1	3\rangle+
|3	2	3	0	0\rangle+
|3	3	1	1	2\rangle+
|3	0	0	2	3\rangle+
|3	1	2	3	1\rangle+
|0	2	0	1	1\rangle+
|0	3	2	0	3\rangle+
|0	0	3	3	2\rangle+
|0	1	1	2	0\rangle+
|1	2	1	3	3\rangle+
|1	3	3	2	1\rangle+
|1	0	2	1	0\rangle+
|1	1	0	0	2\rangle),$\\
$|\Phi_{4}\rangle=\frac{1}{4}(
|3	3	3	3	3\rangle+
|3	2	1	2	1\rangle+
|3	1	0	1	0\rangle+
|3	0	2	0	2\rangle+
|2	3	2	1	1\rangle+
|2	2	0	0	3\rangle+
|2	1	1	3	2\rangle+
|2	0	3	2	0\rangle+
|1	3	1	0	0\rangle+
|1	2	3	1	2\rangle+
|1	1	2	2	3\rangle+
|1	0	0	3	1\rangle+
|0	3	0	2	2\rangle+
|0	2	2	3	0\rangle+
|0	1	3	0	1\rangle+
|0	0	1	1	3\rangle).$\\
\end{example}

\begin{example}\label{34}
Case of $d=3$, $k=4$ and $6\leq N\leq9$: As there exists a $(11,3^5,6)_3$ Golay code \cite{hss}, an ${\rm OA}(3^5,11,3,4)$ with minimal Hamming distance $6$ can be obtained. Hence, we obtain an ${\rm OA}(3^5,10,3,4)$ with minimal Hamming distance $5$. Therefore, the OA is an ${\rm IrOA}(3^5,10,3,4)$. By Theorem \ref{dx}, we obtain a $(10-x)$-qutrit $4$-uniform mixed state, which can be expressed as a symmetric mixture of $3^x$ pure states for $1\leq x\leq 4$. The purity of the state is $\frac{1}{3^x}$.
\end{example}

\begin{example}\label{351}
Case of $d=3$, $k=5$ and $7\leq N\leq11$: In this case, a $(12,3^6,6)_3$ self-dual Golay code exists\cite{hss}. The corresponding OA is an ${\rm OA}(3^6,12,3,5)$ with minimal Hamming distance $6$. Similarly, we obtain a 5-uniform $(12-x)$-qutrit mixed state with purity $\frac{1}{3^x}$ for $1\leq x\leq 5$.
\end{example}

\begin{example}\label{44}
Case of $d=4$, $k=4$ and $6\leq N\leq9$: An ${\rm OA}(4^5,11,4,4)$ with minimal Hamming distance $6$ exists\cite{hss}. Therefore, we obtain a $4$-uniform $(10-x)$-ququart mixed state with purity $\frac{1}{4^x}$ for $1\leq x\leq 4$.
\end{example}

\begin{example}\label{45}
Case of $d=4$, $k=5$ and $7\leq N\leq11$: There exists a $(12,4^6,6)_4$ quadratic residue code \cite{hss}. The corresponding OA is an ${\rm OA}(4^6,12,4,5)$ with minimal Hamming distance $6$. Therefore, we obtain a $5$-uniform $(12-x)$-ququart mixed state with purity $\frac{1}{4^x}$ for $1\leq x\leq 5$.
\end{example}

\begin{example}\label{46}
Case of $d=4$, $k=6$ and $N=9$: From Ref. \cite{pang21w} and an {\rm OA}$(16,4,4,2)$, there exists an {\rm OA}$(256,4,4,4)$ with an orthogonal partition of strength $2$, namely, $\{A_1, A_2, \ldots, A_{16}\}$. Since an {\rm OA}$(64,6,4,3)$ can be represented as
$$\left(\begin{array}{cc}
{\mathbf 0}_{16}&B_1\\
{\mathbf 1}_{16}&B_2\\
{\mathbf 2}_{16}&B_3\\
{\mathbf 3}_{16}&B_4
\end{array}\right),$$
the set $\{B_1,B_2,B_3,B_4\}$ is an orthogonal partition of strength $2$ of $B={\rm OA}(64,5,4,3)$. Therefore, from Ref. \cite{pang21w} and $B$, an {\rm OA}$(4^5,5,4,5)$ with an orthogonal partition of strength $3$, namely, $\{C_1, C_2, \ldots, C_{16}\}$ exists.

Therefore, it also follows from Ref. \cite{pang21w} that
\begin{align*}
M=&\left(\begin{array}{cc}
A_{1}\otimes \bm{1}_{64}&\bm{1}_{16}\otimes C_{1}\\
A_{2}\otimes \bm{1}_{64}&\bm{1}_{16}\otimes C_{2}\\
\vdots&\vdots\\
A_{16}\otimes \bm{1}_{64}&\bm{1}_{16}\otimes C_{16}\\
\end{array}\right)
\end{align*} is an ${\rm OA}(4^7,9,4,6)$.

Assume $A_1=\left(\begin{array}{c}
\mathbf{a}_{1}\\
\mathbf{a}_{2}\\
\vdots\\
\mathbf{a}_{16}\\
\end{array}\right)$ and $C_1=\left(\begin{array}{c}
B_1\\
B_2\\
B_3\\
B_4\\
\end{array}\right)=\left(\begin{array}{c}
\mathbf{c}_{1}\\
\mathbf{c}_{2}\\
\vdots\\
\mathbf{c}_{64}\\
\end{array}\right)$.
Then $[A_{1}\otimes\bm{1}_{64},\bm{1}_{16}\otimes C_{1}]=\left(\begin{array}{cc}
\mathbf{a}_{1}&\mathbf{c}_{1}\\
\mathbf{a}_{1}&\mathbf{c}_{2}\\
\vdots&\vdots\\
\mathbf{a}_{1}&\mathbf{c}_{64}\\
\mathbf{a}_{2}&\mathbf{c}_{1}\\
\vdots&\vdots\\
\mathbf{a}_{2}&\mathbf{c}_{64}\\
\vdots&\vdots\\
\mathbf{a}_{16}&\mathbf{c}_{1}\\
\vdots&\vdots\\
\mathbf{a}_{16}&\mathbf{c}_{64}\\
\end{array}\right)$.

Set $$E_1=\left(\begin{array}{cc}
\mathbf{a}_{1}&\mathbf{c}_{1}\\
\mathbf{a}_{2}&\mathbf{c}_{2}\\
\vdots&\vdots\\
\mathbf{a}_{16}&\mathbf{c}_{16}\\
\end{array}\right),$$
$$E_2=\left(\begin{array}{cc}
\mathbf{a}_{1}&\mathbf{c}_{2}\\
\mathbf{a}_{2}&\mathbf{c}_{3}\\
\vdots&\vdots\\
\mathbf{a}_{16}&\mathbf{c}_{1}\\
\end{array}\right),$$
$$E_3=\left(\begin{array}{cc}
\mathbf{a}_{1}&\mathbf{c}_{3}\\
\mathbf{a}_{2}&\mathbf{c}_{4}\\
\vdots&\vdots\\
\mathbf{a}_{16}&\mathbf{c}_{2}\\
\end{array}\right),$$
$$\vdots$$
$$E_{16}=\left(\begin{array}{cc}
\mathbf{a}_{1}&\mathbf{c}_{16}\\
\mathbf{a}_{2}&\mathbf{c}_{1}\\
\vdots&\vdots\\
\mathbf{a}_{16}&\mathbf{c}_{15}\\
\end{array}\right),$$
$$E_{17}=\left(\begin{array}{cc}
\mathbf{a}_{1}&\mathbf{c}_{17}\\
\mathbf{a}_{2}&\mathbf{c}_{18}\\
\vdots&\vdots\\
\mathbf{a}_{16}&\mathbf{c}_{32}\\
\end{array}\right),$$
$$\vdots,$$
$$E_{63}=\left(\begin{array}{cc}
\mathbf{a}_{1}&\mathbf{c}_{63}\\
\mathbf{a}_{2}&\mathbf{c}_{64}\\
\vdots&\vdots\\
\mathbf{a}_{16}&\mathbf{c}_{62}\\
\end{array}\right),$$
$$E_{64}=\left(\begin{array}{cc}
\mathbf{a}_{1}&\mathbf{c}_{64}\\
\mathbf{a}_{2}&\mathbf{c}_{49}\\
\vdots&\vdots\\
\mathbf{a}_{16}&\mathbf{c}_{63}\\
\end{array}\right).$$
Then $[A_{1}\otimes \bm{1}_{64},\bm{1}_{16}\otimes C_{1}]$ can be expressed as $$\left(\begin{array}{cc}
E_1\\
E_2\\
\vdots\\
E_{64}\\
\end{array}\right).$$
Since $E_i=[{\rm OA}(16,4,4,2),{\rm OA}(16,5,4,2)]$, $i=1,2,\ldots,64$, we have ${\rm MD}(E_i)=7$. Similarly, the matrix $[A_{j}\otimes \bm{1}_{64},\bm{1}_{16}\otimes C_{j}]$ for each $j=2,3,\ldots,16$, can also be orthogonally partitioned into $64$ submatrices with the form $[{\rm OA}(16,4,4,2),{\rm OA}(16,5,4,2)]$. Therefore, by Theorem \ref{th} we obtain a $9$-ququart $6$-uniform mixed state with purity $\frac{1}{4^5}$.
\end{example}

\section{Discussions on Purity of $k$-Uniform States}\label{dis}

In \cite{kmix} K{\l}obus et al. conjectured the possibility of constructing $k$-uniform states with high purity and provided numerical evidence that $k$-uniform states attain the highest purity for $k\leq6$. In this study, based on our construction method we have obtained states with the highest possible purity, which answers affirmatively the open question raised in Ref. \cite{kmix}: what is the highest possible purity of a $k$-uniform state for a given number of parties $N$?

Interestingly, we can obtain $k$-uniform mixed states with lower purity from the ones with a higher purity. For instance, from the 3-uniform mixed state with the purity $\frac{1}{2}$ in Example \ref{273} given by two pure states $|\Phi_{1}\rangle$ and $|\Phi_{2}\rangle$, we obtain a 3-uniform mixed state with purity $\frac{1}{4}$ given by four pure states $|\Phi_{1}\rangle$, $|\Phi_{2}\rangle$, $|\Phi_{1}'\rangle=(\mathbb{I}_{2^6}\otimes \sigma_x)|\Phi_{1}\rangle$ and
$|\Phi_{2}'\rangle=(\mathbb{I}_{2^6}\otimes \sigma_x)|\Phi_{2}\rangle.$

In addition, for any $k$-uniform state with respect to density matrix $\rho$, the purity of the $\rho$, ${\rm Tr}(\rho^2)$, satisfies $0<{\rm Tr}(\rho^2)\leq1$. When ${\rm Tr}(\rho^2)=1$, $\rho$ is a $k$-uniform pure state. Otherwise, it is a $k$-uniform mixed state \cite{shu}.

Concerning the range of the purity, it is observed that if the obtained state is pure, the corresponding purity is 1. Otherwise, the purity is less than or equal to $\frac{1}{2}$. The purity of each state derived from Theorem \ref{gsan} is $2^{m-N}$. Specifically, when $m=N$, the purity equals 1. Otherwise, it is less than or equal to $\frac{1}{2}$. The purity of every state from Theorem \ref{1un} is 1. In Theorem \ref{th}, the related purity is 1 for $m=1$, while the purity is equal to $\frac{1}{m}$ for $m\geq2$. The purity of all the $k$-uniform mixed states in Theorems \ref{cz}, \ref{dx} and Corollaries \ref{kjia1}, \ref{35} is less than or equal to $\frac{1}{d}$ for $d\geq2$.

With respect to the purity of the $k$-uniform states derived from Theorem \ref{th}, we denote $A_r$ an OA$(r,N,d,k)$ for given $N$, $d$ and $k$. Generally, the larger $r$ is, the smaller MD$(A_r)$ is. Therefore, the pivotal challenge is to determine the minimal number of runs $r$ and the minimal number of blocks $m$ if $A_r$ can be orthogonally partitioned into $m$ blocks. By minimizing both $r$ and $m$, we assert that the states obtained from Theorem \ref{th} attain the highest possible purity.

From Theorem \ref{th}, if an ${\rm OA}(r,N,d,k)$ can be orthogonally partitioned into $m$ blocks $\{A_1, A_2, \ldots, A_{m}\}$ and $\text{MD}(A_i) \geq k+1$, then we can obtain $k$-uniform states. When $r$ reaches the minimum for given $N$, $d$, and $k$, we use the following four cases (a), (b), (c) and (d) to illustrate that the obtained states in Section \ref{gw} attain the highest possible purity by our constructions. Nevertheless, for the case (e) we can not determine whether the purity attains the highest possible value.

(a) If $\frac{r}{m}=d^{k'}$ and ${\rm MD}(A_i)\geq k+1\geq N-k'+1$, then $A_i$ is an ${\rm OA}(d^{k'},N,d,k')$. In fact, assume that there exists a $d^{k'}\times k'$ subarray of $A_i$ containing two identical rows. Then ${\rm MD}(A_i)\leq N-k'$, which reaches a contradiction. However, if an ${\rm OA}(d^{k'},N,d,k')$ does not exist, then the ${\rm OA}(r,N,d,k)$ can not be orthogonally partitioned into $m$ blocks such that $\text{MD}(A_i) \geq k+1$.

In Example \ref{34}, there exits an orthogonal partition $\{A_1, A_2, \ldots, A_{9}\}$ of an ${\rm OA}(3^5,8,3,4)$ where $r=3^5$ reaches the minimum for given $N=8$, $d=3$, and $k=4$ from Table 12.2 in \cite{hss} and ${\rm MD}(A_i)\geq5$. Therefore, we assert that the 8-qutrit 4-uniform mixed state attain the highest possible purity $\frac{1}{9}$. This is because when $m=3$ we have $k'=4$. Since an ${\rm OA}(3^4,8,3,4)$ does not exist from Ref. \cite{hss}, any orthogonal partition with $m=3$ blocks of an ${\rm OA}(3^5,8,3,4)$ does not exist.

(b) If ${\rm MD}(A_i)\geq k+1=N$, then $\frac{r}{m}\leq d$. Therefore, $m=\frac{r}{d}$ reaches the minimum value. For example, the states obtained from Corollary \ref{kjia1} (1)-(3) attain the highest possible purity.


(c) If ${\rm MD}(A_i)\geq k+1\geq N-k'$ for $k'\geq1$, then the number of coincidences between the $u$-th and $v$-th rows of $A_i$ satisfies that $\Delta^{(uv)}\leq k'$ \cite{muk95} and $\Delta_{\frac{r}{m}}=\sum_{1\leq u<v\leq \frac{r}{m}}\Delta^{(uv)}\leq k'C_{\frac{r}{m}}^2$. Furthermore, assume $\frac{r}{m}\geq d+1$, then for any $d+1$ rows, we have $\Delta_{d+1}=\sum_{1\leq u<v\leq d+1}\Delta^{(uv)}\leq k'C_{d+1}^2$ and $\Delta_{d+1}\geq N$. Therefore, if $N>k'C_{d+1}^2$, then $\frac{r}{m}\leq d$ and $m=\frac{r}{d}$ reaches the minimum value. Suppose $x_0$, $x_1$, $\ldots$, and $x_{d-1}$ are the number of occurrences of all levels in a column of $A_i$ such that $C_{x_0}^2+C_{x_1}^2+\cdots+C_{x_{d-1}}^2$ reaches the minimum value. Then $x_0+x_1+\cdots+x_{d-1}=\frac{r}{m}$ and $\Delta_{\frac{r}{m}}\geq N(C_{x_0}^2+C_{x_1}^2+\cdots+C_{x_{d-1}}^2)=N\times \frac{x_0^2+x_1^2+\cdots+x_{d-1}^2-\frac{r}{m}}{2}\geq N\times\frac{\frac{(\frac{r}{m})^2}{d}-\frac{r}{m}}{2}=\frac{rN(\frac{r}{m}-d)}{2md}$. Therefore, $k'C_\frac{r}{m}^2\geq \frac{rN(\frac{r}{m}-d)}{2md}$, otherwise $A_i$ does not exist and the ${\rm OA}(r,N,d,k)$ can not be orthogonally partitioned into $m$ blocks such that $\text{MD}(A_i) \geq k+1$.

For example, the minimal number of $r$ in an {\rm OA}$(r,7,3,5)$ is $3^6$ from Table 12.2 in \cite{hss}. If ${\rm MD}(A_i)\geq 6$, then $\frac{r}{m}\leq 3$ since $k'=1$ and  $7>C_4^2=6$. Therefore, the rows of an {\rm OA}$(3^6,7,3,5)$ can be orthogonal partitioned into at least $3^5$ blocks. Hence, the 5-uniform $7$-qutrit mixed state from the mixture of $3^5$ pure states in Example \ref{351} attains the highest possible purity.

(d) For the special case where $d=k=3$ and $N=5$, a 3-uniform mixed state of 5-qutrit with a purity of $\frac{1}{18}$ can be derived from Example \ref{18}. The smallest value of $r$ rows that permits the existence of an OA is 54. According to \cite{h97}, there are a total of four non-isomorphic OA$(54, 5,3,3)$s. However, none of these arrays can be orthogonally partitioned into nine blocks such that the minimum Hamming distance of each block is greater than or equal to 4. Therefore, $m=18$ is determined as the minimal value, and the resulting state achieves the highest possible purity.

(e) Whether $r$ reaches the minimum in any {\rm OA}$(r,N,d,k)$ for given $N$, $d$, and $k$ is uncertain. This uncertainty arises from the description provided in Table 12.1-12.3 of \cite{hss}, which indicates that the existence of an array with fewer than $r$ runs remains an open question.

We summarize our main results in Tables \ref{3sp} and \ref{4sp}, where ``P'' stands for the purity of the state, ``M'' the corresponding method, and ``H'' the items (a)-(e) above. That (a), (b), (c) and (d) appear in ``H'' column represents the states obtained achieve the highest possible purity.

\begin{table*}
\caption{$k$-uniform $N$-qutrit (pure and mixed) states}
\label{3sp}
\begin{center}
{\begin{tabular}{|c|c|c|c|c|c|c|c|c|c|c|c|c|c|c|c|c|c|c|}\hline
\diagbox{$k$}{$N$} &\multicolumn{3}{c|}{4} & \multicolumn{3}{c|}{5} & \multicolumn{3}{c|}{6} & \multicolumn{3}{c|}{7} & \multicolumn{3}{c|}{8} & \multicolumn{3}{c|}{9}\\\hline
 & P & M&H& P & M&H & P & M&H & P & M&H & P & M&H& P & M&H\\\hline
1 &1 &Th. \ref{1un} & & 1 &Th. \ref{1un}&& 1 &Th. \ref{1un} && 1 &Th. \ref{1un} && 1 &Th. \ref{1un} && 1 &Th. \ref{1un}&\\\hline
2&1 &Ref. \cite{npj} && 1 &Ref. \cite{npj}&& 1 &Ref. \cite{npj} && 1 &Ref. \cite{npj} && 1 &Ref. \cite{npj}& & 1 &Ref. \cite{npj}&\\\hline
3&$\frac{1}{9}$ &Col. \ref{kjia1}&(b)& $\frac{1}{18}$ &Ex. \ref{18}&(d)& 1 &Ref. \cite{npj} && 1 &Ref. \cite{npj} && 1 &Ref. \cite{npj}& &1 &Ref. \cite{npj}&\\\hline
4&&&& $\frac{1}{27}$ &Col. \ref{kjia1}&(b)&$\frac{1}{81}$&Ex. \ref{34}&(c)& $\frac{1}{27}$ &Ex. \ref{34} &(a)& $\frac{1}{9}$ &Ex. \ref{34}&(a)& 1 &Ref. \cite{huber18}&\\\hline
5&& &&& &&$\frac{1}{81}$ &Col. \ref{kjia1}&(b)& $\frac{1}{3^5}$&Ex. \ref{351}&(c)& $\frac{1}{81}$&Ex. \ref{351}&(a)& $\frac{1}{27}$ &Ex. \ref{351}&(a)\\\hline
6&&&&&&&&&&$\frac{1}{3^5}$ &Col. \ref{kjia1}&(b)&$\frac{1}{3^6}$&Col. \ref{kjia1}&(c)&$\frac{1}{3^7}$&Col. \ref{kjia1}& (e)\\\hline
7&&&&&&&&&&&&&$\frac{1}{3^6}$&Col. \ref{kjia1}&(b)& $\frac{1}{3^7}$&Col. \ref{kjia1}&(c) \\\hline
8&&&&&&&&&&&&&&&& $\frac{1}{3^7}$ &Col. \ref{kjia1}&(b)\\\hline
\end{tabular}}\\
\end{center}
\end{table*}

\begin{table*}
\caption{$k$-uniform $N$-ququart (pure and mixed) states}
\label{4sp}
\begin{center}
{\begin{tabular}{|c|c|c|c|c|c|c|c|c|c|c|c|c|c|c|c|c|c|c|}\hline
\diagbox{$k$}{$N$} &\multicolumn{3}{c|}{4} & \multicolumn{3}{c|}{5} & \multicolumn{3}{c|}{6} & \multicolumn{3}{c|}{7} & \multicolumn{3}{c|}{8} & \multicolumn{3}{c|}{9}\\\hline
 & P & M&H& P & M&H & P & M &H& P & M &H& P & M &H& P & M&H\\\hline
1 &1 &Th. \ref{1un}&& 1 &Th. \ref{1un}& & 1 &Th. \ref{1un}& & 1 &Th. \ref{1un}& & 1 &Th. \ref{1un}& & 1 &Th. \ref{1un}&\\\hline
2&1 &Ref. \cite{npj}& & 1 &Ref. \cite{npj}&& 1 &Ref. \cite{npj}& & 1 &Ref. \cite{npj}& & 1 &Ref. \cite{npj}& & 1 &Ref. \cite{npj}&\\\hline
3&$\frac{1}{16}$ &Col. \ref{kjia1}&(b)& $\frac{1}{4}$ &Ex. \ref{64}& (b) & 1 &Ref. \cite{npj}& & 1 &Ref. \cite{npj}& & 1 &Ref. \cite{npj}&& 1 &Ref. \cite{npj}&\\\hline
4&&& &$\frac{1}{64}$ &Col. \ref{kjia1}&(b)&$\frac{1}{4^4}$ &Ex. \ref{44}& (e) & $\frac{1}{64}$ &Ex. \ref{44}& (e) & $\frac{1}{16}$ &Ex. \ref{44}&(e) & 1 &Ref. \cite{huber18}&\\\hline
5&&&&&&& $\frac{1}{4^4}$ &Col. \ref{kjia1}&(b)& $\frac{1}{4^5}$&Ex. \ref{45}&(e)& $\frac{1}{4^4}$&Ex. \ref{45}&(e)& $\frac{1}{4^3}$ &Ex. \ref{45}&(e)\\\hline
6&&&&&&&&&&$\frac{1}{4^5}$ &Col. \ref{kjia1}&(b)&$\frac{1}{4^6}$&Col. \ref{kjia1}&(e)&$\frac{1}{4^5}$&Ex. \ref{46}&(e)\\\hline
7&&&&&&&&&&&&&$\frac{1}{4^6}$&Col. \ref{kjia1}&(b)& $\frac{1}{4^7}$& Col. \ref{kjia1}&(e)\\\hline
8&&&&&&&&&&&&&&&& $\frac{1}{4^7}$ &Col. \ref{kjia1}&(b)\\\hline
\end{tabular}}\\
\end{center}
\end{table*}

\section{Conclusions}\label{con}

We have introduced novel approaches in constructing $k$-uniform mixed states with the highest possible purity. These approaches have led to the discovery of various new $k$-uniform mixed states.

The techniques for the $N$-qubit case in Section \ref{zh} and those for the $N$-qudit case in Section \ref{gw} differ significantly. To date, the method described in Section \ref{zh} cannot be applied to the latter case, whereas the approach in Section \ref{gw} is applicable to the former. Notably, certain states derived from Theorem \ref{gsan} cannot be produced from Theorem \ref{th}, such as the 4-uniform 7-qubit mixed state with a purity $\frac{1}{8}$ in Example \ref{74}, due to the unequal coefficients in terms of $|\varphi_1\rangle$. All methods introduced in Section \ref{gw} are uniform.

OAs have been playing a crucial role in constructing $k$-uniform states. We have constructed $k$-uniform pure states by using IrOAs \cite{npj,qinp,jpa}. In this paper, OAs with orthogonal partitions are employed for constructing $k$-uniform mixed states (see Section \ref{gw}). Additionally, the Galois Field GF(4) is utilized for constructing $k$-uniform mixed qubit states (see Section \ref{zh}). See FIG. \ref{A} for details.

Significant differences between our findings and those reported in Ref. \cite{kmix} include the following:

1. Our approach in Section \ref{zh} utilizes a matrix over Galois field GF(4), whereas the method in \cite{kmix} involves identifying a set of $N$-qubit Pauli operators, which is more complex.

2. We propose a general method for generating $k$-uniform mixed states by using OAs in Section \ref{gw}, which is a strategy not offered in Ref. \cite{kmix}. Our methods are uniform and capable of producing mixed states in any dimension, which is a greater challenge than constructing solely two-dimensional states.

3. We have constructed all the states not depicted in Fig. 1 of Ref. \cite{kmix}.

4. The coefficients of each term in the pure states of the mixed states obtained through our methods in Section \ref{gw} are all equal to 1 under normalization, which is different from those in Ref. \cite{kmix}.

5. Moreover, many states produced by our methods exhibit higher purity compared with those in Ref. \cite{kmix}.

\thispagestyle{empty}
\tikzstyle{jx}=[thick, rounded corners, rectangle, minimum width=2cm, minimum height=1cm, text centered, draw=black] 
\tikzstyle{lx}=[diamond, aspect=2.3, text centered, draw=black] 
\tikzstyle{arrow}=[->,>=stealth]

\begin{figure*}[ht]
\centering
\begin{tikzpicture}
\node[jx](start){$k$-uniform state};
\node[jx, left of=start, yshift=-1.5cm, node distance=4.5cm](p){$k$-uniform pure state};
\node[jx, right of=start, yshift=-1.5cm, node distance=4.5cm](m){$k$-uniform mixed state};
\node[jx, left of=p, yshift=-1.5cm,node distance=2.5cm](pb){qubit};
\node[jx, right of=p, yshift=-1.5cm,node distance=2.5cm](pd){qudit};
\node[jx, left of=m, yshift=-1.5cm,node distance=2.5cm](mb){qubit};
\node[jx, right of=m, yshift=-1.5cm,node distance=2.5cm](md){qudit};
\node[jx, below of=p, yshift=-1cm,node distance=2cm,draw=blue](hh){IrOAs};
\node[jx, left of=mb, yshift=-1.5cm,node distance=2cm,draw=blue](gf){GF(4)};
\node[jx, below of=m, yshift=-1cm,node distance=2cm,draw=blue](fh){orthogonal partitions of OAs};
\draw [->, thick] (start) -| (p);
\draw [->, thick] (start) -| (m);
\draw [->, thick] (p) -| (pb);
\draw [->, thick] (p) -| (pd);
\draw [->, thick] (m) -| (mb);
\draw [->, thick] (m) -| (md);
\draw [->, thick] (pb) |- (hh);
\draw [->, thick] (pd) |- (hh);
\draw [->, thick] (mb) -| (gf);
\draw [->, thick] (mb) -| (fh);
\draw [->, thick] (md) -| (fh);
\end{tikzpicture}
\caption{Construction methods for $k$-uniform states. Black lines represent the methods.}
\label{A}
\end{figure*}
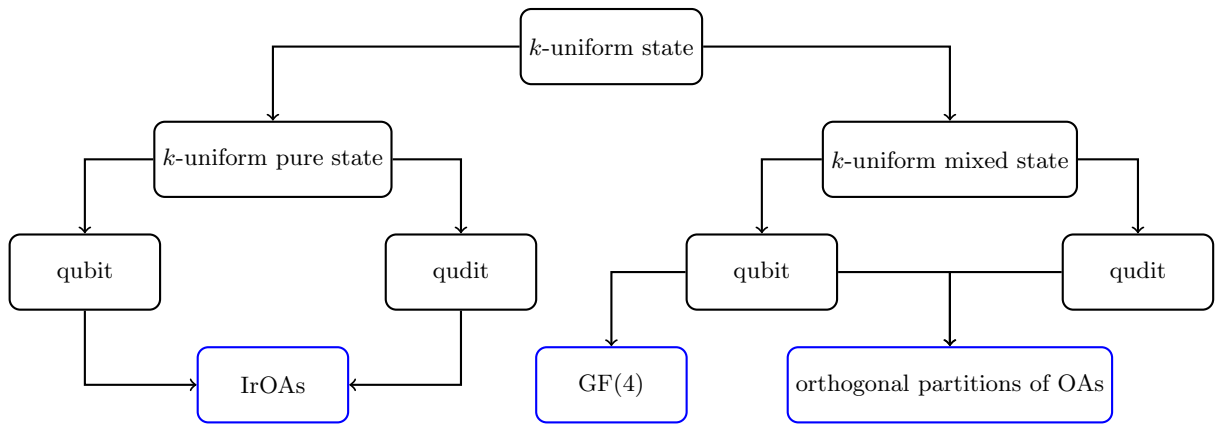

\section*{ACKNOWLEDGEMENTS}

{\bf Funding:} This work was supported by the National Natural Science Foundation of China (Grant Nos. 11971004, 12075159 and 12171044); the
specific research fund of the Innovation Platform for Academicians of Hainan Province under Grant No. YSPTZX202215; Key Scientific and Technological Research Projects in Henan Province(Grant No. 242102220105).

{\bf Conflicts of Interest:} The authors declare no conflict of interest.

{\bf Data Availability Statement:} The data that support the findings of this study are available from
the corresponding author upon reasonable request.

\end{document}